\def\bs{\boldsymbol}
\def \foral {\textrm{for all }}
\newtheorem{theorem}{Theorem}
\newtheorem{corollary}{Corollary}
\newtheorem{lemma}{Lemma}
\newtheorem{remark}{Remark}
\newtheorem{assumption}{Assumption}
\newtheorem{definition}{Definition}
\title{Collision Helps\\ Algebraic Collision Recovery for Wireless Erasure Networks}
\author{Ali ParandehGheibi, Jay Kumar Sundararajan, Muriel M\'edard \\  Department of Electrical Engineering and
Computer Science\\ Massachusetts Institute of Technology  \\ Email:\{parandeh, jaykumar,
medard\}@mit.edu}
\begin{document}
\maketitle

\begin{abstract}
Current medium access control mechanisms are based on collision avoidance and collided packets are
discarded. The recent work on ZigZag decoding departs from this approach by recovering the original
packets from multiple collisions. In this paper, we present an algebraic representation of
collisions which allows us to view each collision as a linear combination of the original packets.
The transmitted, colliding packets may themselves be a coded version of the original packets.

We propose a new acknowledgment (ACK) mechanism for collisions based on the idea that if a set of
packets collide, the receiver can afford to ACK exactly one of them and still decode all the
packets eventually. We analytically compare delay and throughput performance of such collision
recovery schemes with other collision avoidance approaches in the context of a single hop wireless
erasure network. In the multiple receiver case, the broadcast constraint calls for combining
collision recovery methods with network coding across packets at the sender. From the delay
perspective, our scheme, without any coordination, outperforms not only a ALOHA-type random access
mechanisms, but also centralized scheduling. For the case of streaming arrivals, we propose a
priority-based ACK mechanism and show that its stability region coincides with the cut-set bound of the packet erasure network.
\end{abstract}

\section{Introduction}
The nature of the wireless network is intrinsically different from the wired network because of the
sharing of the medium among several transmitters. Such a restriction generally has been managed
through forms of scheduling algorithms to coordinate access to the medium, usually in a distributed
manner. The conventional approach to the Medium Access Control (MAC) problem is contention-based
protocols in which multiple transmitters simultaneously attempt to access the wireless medium and
operate under some rules that provide enough opportunities for the others to transmit. Examples of
such protocols in packet radio networks include ALOHA, MACAW, CSMA/CA, etc\cite{book:datanetworks}.

However, in many contention-based protocols, it is possible that two or more transmitters transmit
their packet simultaneously, resulting in a \emph{collision}. The collided packets are considered
useless in the conventional approaches. There is a considerable literature on extracting partial
information from such collisions. Gollakota and Katabi \cite{ZigZag} showed how to recover multiple
collided packets in a 802.11 system using \emph{ZigZag decoding} when there are enough
transmissions involving those packets. In fact, they suggest that each collision can be treated as
a linearly independent equation of the packets involved. ZigZag decoding is based on interference
cancelation, and hence, requires a precise estimation of channel attenuation and phase shift for
each packet involved in a collision. ZigZag decoding provides a fundamentally new approach to
manage interference in a wireless setting that is essentially decentralized, and can recover losses
due to collisions. In this work, we wish to understand the effects of this new approach to
interference management in the high SNR regime, where interference, rather than noise, is the main
limit factor for system throughput.


We provide an abstraction of a single-hop wireless network with erasures when a generalized form of
ZigZag decoding is used at the receiver, and network coding is employed at the transmitters. We
introduce an algebraic representation of the collisions at the receivers, and study conditions
under which a collision can be treated as a linearly independent equation (degree of freedom) of
the original packets at the senders.  We use this abstract model to analyze the delay and
throughput performance of the system in various scenarios.

First, we analyze a single-hop wireless erasure network, when each sender has one packet to deliver
to all of its neighbors. We characterize the expected time to deliver all of the packets to each
receiver when collisions of arbitrary number of packets are recoverable.  We observe that with
collision recovery we can deliver $n$ packets to a receiver in $n+O(1)$ time slots, where $n$ is
the degree of that particular receiver. This is significantly smaller than the delivery time of
centralized scheduling and contention-based mechanisms such as slotted ALOHA. In the case that
collisions of only a limited number of packets can be recovered, we propose a random access
mechanism in conjunction with ZigZag decoding to limit the level of contention at the receiver. Our
numerical results show that such a scheme provides a significant improvement upon contention-based
mechanisms even if each recoverable collision is limited to only two packets.

Second, we analyze the throughput of this system in a scenario where packets arrive at each sender
according to some arrival process. In this scenario, each sender \emph{broadcasts} a random linear
combination of the packets in its queue, and the receivers perform generalized form of ZigZag
decoding for interference cancellation. We characterize the stability region of the system, and
propose a decentralized acknowledgement mechanisms to stabilize the queues at the senders. The
stability region of the system with collision recovery achieves the cut-set outer bound of the
erasure network, that is \emph{strictly larger} than that of the system with centralized
scheduling.

The information theoretic capacity of wireless erasure network has been studied in the related
literature. The works by Dana \emph{et al.} \cite{hassibi}, Lun \emph{et al.}
\cite{desmond_reliable}, and Smith and Hassibi \cite{smith} focus on a wireless erasure network
with only broadcast constraints, while Smith and Vishwanath \cite{vishwanath} study the capacity of
an erasure network by considering only interference constraints. These works show how to achieve
the cut-set bound of the multi-hop erasure network under specific constraints for a single unicast
or multicast session. In contrast, our work takes into account both broadcast and interference
constraints, and studies the stability \emph{region}  for multiple sessions over a single-hop
wireless network. Another related literature investigates collision recovery methods such as the
works by Tsatsanis \emph{et al.} \cite{NDMA}, and Paek and Neely \cite{neely_zz}. In this
literature, once a collision of $k$ packets occurs, all senders remain silent until those involved
in the collision retransmit another $k-1$ times. Our proposed scheme, however, does not require
such coordination among the senders.

 The rest of this paper is organized as follows. In Section \ref{model_sec}, we present an
abstract model of a single-hop wireless network with erasures. Section \ref{abstraction_sec}
discusses an algebraic representation of the collisions at the receivers. Section \ref{delay_sec}
is dedicated to mean delivery time characterization of a single-receiver system for various
interference management schemes. In Section \ref{throughput_sec}, we characterize the stability
region of the single-receiver system with collision recovery. In Section \ref{multiple_sec}, we
generalize the results of preceding sections to the case of a single-hop wireless network with
multiple receivers. Finally, concluding remarks and extensions are discussed in Section
\ref{conclusion_sec}.

\section{System model}\label{model_sec}
The system consists of a single-hop wireless network with $n$ senders and $r$ receivers. We assume
that a node cannot be both a sender and a receiver. The connectivity is thus specified by a
bipartite graph. Fig. \ref{fig:bipartite} shows an example of such a network.

\begin{figure}
\centering
  \includegraphics[width=.4\textwidth]{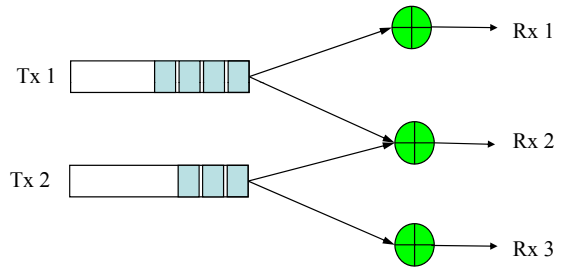}\\
  \caption{Single-hop wireless network model with $n$ senders and $r$ receivers}\label{fig:bipartite}
\end{figure}

We assume that time is slotted. Every sender is equipped with an infinite sized buffer. The goal of
a sender is to deliver all of its packets to each of its neighbors, i.e., the set of receivers to
which it is connected.

In every slot, a sender can \emph{broadcast} a packet to its neighbors. Owing to the fading nature
of the wireless channel, not all packet transmissions result in a successful reception at every
neighbor. Each link between any sender $i$ and any receiver $j$ may experience packet erasures.
These erasures occur with probability $p$, and are assumed to be independent across links and over
time.  This type of erasure is to model the effect of obstacles between the senders and the
receivers. The channel state between $i$ and $j$ is denoted by $c_{ij}(t)$.

At the end of every slot, each receiver is allowed to send an acknowledgment (ACK) to any one of
the senders to which it is connected. A packet is retained in the sender's queue until it has been
acknowledged by all the receivers. We ignore the overhead caused by the ACKs, and assume that the
ACKs are delivered reliably without any delay.

Note that a collision of packets at a receiver does not immediately imply an erasure. With ZigZag
decoding, it may be possible to extract useful information from collisions. In the following, we discuss how a collision could be thought of as a linear combination of the original
packets at the sender.

\subsection{An algebraic representation of collisions}\label{abstraction_sec}
In this section, we introduce an algebraic representation of collisions. The collision of two
packets is essentially the superposition of the physical signal corresponding to the packets. A packet is essentially a vector of bits that can be grouped into symbols over a finite field $\mathbb{F}_q$. For the rest of this section, we represent a packet as a polynomial over the delay variable $D$, with coefficients being the symbols of $\mathbb{F}_q$ that form the packet. The mapping from the packet to the corresponding physical signal is a result of two operations -- channel coding and modulation. We abstract these two operations in the form of a map $M$ from symbols over $\mathbb{F}_q$ to the complex number field:
\[M: \mathbb{F}_q \rightarrow \mathbb{C}\]
We assume that the map $M$ is such that given a complex number, there is a well-defined demodulation and channel decoding method that outputs the symbol from $\mathbb{F}_q$ that is most likely to have been transmitted.

\begin{remark}
The above assumption essentially says that the channel coding occurs over blocks of $\log_2 q$ bits
(corresponding to a single symbol of $\mathbb{F}_q$). Depending on $q$, this could mean a short
code length, which would be effective only with a high SNR.
\end{remark}

Let $X(D)$ and $Y(D)$ be two packets at two different senders, represented as polynomials over $\mathbb{F}_q$. The coding and modulation results in a signal polynomial over the complex field: $S_X(D)$ and $S_Y(D)$. Now imagine that these two packets collide with each other at a receiver twice, in two different time slots. We denote $h^{(t)}_j$ to be the channel coefficient in slot $t$ from sender $j$.

When packets collide, they may not be perfectly aligned. Let $u^{(t)}_{j}$ denote the offset (in symbols) of the packet from sender $j$ within slot $t$ measured from the beginning of the slot. We assume that a packet is significantly longer than the offsets so that the loss of throughput because of these offsets is negligible.

The channel gains, offsets and the identity of the packets that are involved in the collision are assumed to be known at the receiver. Then, the two collisions can be represented in the following way:
\[\left(\begin{array}{c}C_1(D)\\C_2(D)\end{array}\right) = \left(
\begin{array}{cc}
h^{(1)}_{1}D^{u^{(1)}_1} & h^{(1)}_{2}D^{u^{(1)}_2}\\
h^{(2)}_{1}D^{u^{(2)}_1} & h^{(2)}_{2}D^{u^{(2)}_2}\\
\end{array}
\right)\left(\begin{array}{c}S_1(D)\\S_2(D)\end{array}\right),\]
\noindent or alternately, $C=HS$.

Therefore, with $n$ collisions of the same $n$ packets, it is possible to decode them all as long as the $n\times n$ transfer matrix $H$ is invertible over the field of rational functions of $D$. The process of decoding by inverting this matrix is more general than the ZigZag procedure of \cite{ZigZag}. The decoding process will result in the signals corresponding to the original packets. The signals will then have to be demodulated and decoded (channel coding) to obtain the original data. This algebraic representation formalizes the intuition introduced in \cite{ZigZag} that every collision is like a linear equation in the original packets.


\subsection{Combining packet coding with collision recovery}
Due to the broadcast constraint of the wireless medium, a sender that wants to broadcast data to several receivers will have to code across packets over a finite field in order to achieve the maximum possible throughput. Random linear coding is known to achieve the multicast capacity over wireless erasure networks \cite{desmond_reliable}. Let us suppose that the sender codes across packets over the field $\mathbb{F}_q$ and that the coding coefficients are known at the receiver.

This can also be incorporated into the above formulation in the following sense. Suppose a receiver receives $n$ collisions, where the colliding packets in each collision are themselves finite-field linear combinations of a collection of $n$ original packets, then it is possible to decode all $n$ packets from the collisions.

\begin{figure}
\centering
  \includegraphics[width=.8\columnwidth]{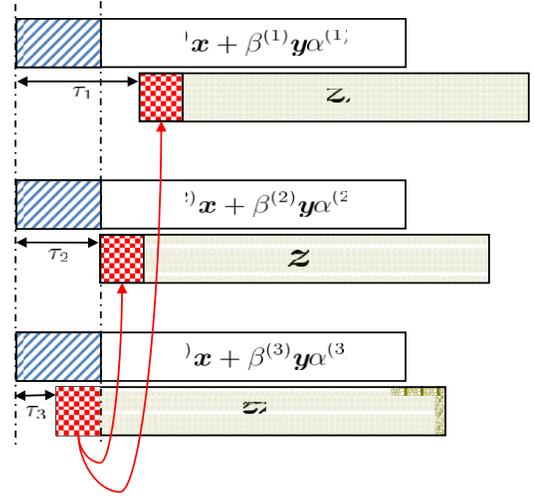}\\
  \caption{Combining packet coding with collision recovery}\label{ZigZag}
\end{figure}

This is immediately seen if we assume that that the coding and modulation are linear operations, i.e., that $M$ is a linear function with respect to the symbols of the original packets. In this case, the above matrix representation will still hold, and the invertibility condition for decoding will also be true. However, in general, the modulation operation may not be linear with respect to the original packets' symbols. Even in this case, we can still decode the $n$ packets from $n$ collisions.

We explain this using a simple example with two senders and one receiver. Suppose the first sender has two packets $\mathbf{x}$ and $\mathbf{y}$ and the second sender has a single packet $\mathbf{z}$. The first sender transmits a random linear combination of its two packets in every slot, while the second sender repeat packet $\mathbf{z}$ in every slot. Figure \ref{ZigZag} shows the collisions in three different time slots. Using the three collisions, the receiver can decode all three packets as follows. The offsets between the first and second senders' packets in the three collisions are $\tau_1$, $\tau_2$ and $\tau_3$. From the figure, since the first $\tau_2$ symbols of the first two collisions are interference-free, we can decode the first $\tau_2$ symbols of $\mathbf{x}$ and $\mathbf{y}$. Using this, we can compute the first $\tau_2$ symbols of $\alpha^{(3)} \mathbf{x}+\beta^{(3)} \mathbf{y}$, and thereby obtain the first $(\tau_2-\tau_3)$ symbols of $\mathbf{z}$. This !
 process can be continued after subtracting these symbols from the other collisions.

We assume throughout this paper that the field size $q$ is large enough that every collision counts as a new degree-of-freedom (also called \emph{innovative}) if and only if it involves at least one packet that has not yet been decoded. Every such collision counts as one step towards decoding the packets.


\section{Delivery time characterization for the single receiver case}\label{delay_sec}
In this section, we study a special case where there is only one receiver in the network. We shall
show later in this paper that the results derived in this section generalize to the multiple
receiver case. We study a scenario where every sender has a single packet that needs to be
delivered to the receiver.

\begin{definition}
Consider a single-hop network with a single receiving node and $n$ senders, each having one packet
to transmit. Define the \emph{delivery time}, $T_D(n)$, as the time to transmit all packets
successfully to the receiver.
\end{definition}

We can divide the delivery time into $n$ portions, where the $k^{th}$ portion corresponds to the
additional time required to for the receiver to send the $k^{th}$ ACK, starting from the time when
the previous (i.e. $(k-1)^{st}$) ACK was sent. We define the following notation, for $k=1,2,\ldots
n$:

\

\begin{tabular}{lcp{2.5in}}
\ \ $T_k$ &=& Time when the receiver sends the $k^{th}$ acknowledgment\\
\vspace{.1in}
\ \ $X_k$ &=& $T_k-T_{k-1}$\ \ \ \ ($T_0$ is assumed to be 0).\\
\end{tabular}

Note that $T_D(n)$ is then given by:
\begin{equation}\label{eq:tdx}
T_D(n)=T_n=\sum_{k=1}^n X_k
\end{equation}

The goal of this section is to characterize the expectation of the delivery time for ZigZag
decoding, and to compare it with contention-based protocols and a central scheduling mechanism.

First we study schemes that treat any collision as a loss. In this case, collisions have to be
avoided either by centralized coordination among the senders, or in a distributed way by having
senders access the channel in a probabilistic manner, as studied in the literature (Please see
Chapter 4 of \cite{book:datanetworks} for a summary).

\subsection{Centralized scheduling}
We assume that the receiver, upon successfully receiving a packet, sends an acknowledgment to the
corresponding sender. With centralized scheduling, we assume the following policy. The channel is
initially reserved for sender 1, up to the point when its packet is acknowledged. At this point,
the channel is reserved for sender 2, and so on. In this setting, the calculation of the expected
delivery time is straightforward. For each sender, the delivery is complete in the first slot when
the channel from that sender to the receiver is not under erasure. The time $X_k$ between the
$(k-1)^{st}$ and the $k^{th}$ ACK, which is also the delivery time for the $k^{th}$ sender, is thus
a geometric random variable, with mean $\frac {1}{1-p}$. This implies that the total expected
delivery time under centralized scheduling policy is given by:
\[\mathbb{E}[T_D(n)]=\sum_{k=1}^n \mathbb{E}[X_k]= \frac{n}{1-p}.\]

Note that the delivery time for centralized scheduling is normally a lower bound for the delivery
time of other distributed probabilistic approaches because it ensures that there is no collision.
In distributed approaches, there is always some probability of a collision.

\subsection{Random access}
In this case, we assume that in every slot, each sender transmits its packet with probability $q$
until it is acknowledged. The choice of whether to transmit or not is made independently across
senders and across time. Note that, by controlling the access probability $q$, the senders can
control the level of contention and thereby prevent collisions.

\begin{theorem}\label{thm:randaccess}
  The expected delivery time for the random access scheme with an access probability $q$ is given by:
 \[\mathbb{E}[T_D(n)]=\sum_{k=1}^n\frac{1}{kq_e(1-q_e)^{k-1}}.\]
where $q_e = q(1-p)$ is the effective probability of access, after incorporating the erasures.
\end{theorem}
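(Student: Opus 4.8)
The plan is to reduce the entire process to a sequence of geometric waiting times --- one per ACK --- by observing that the only state that matters at any instant is the number of senders still unacknowledged.

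First I would collapse the access randomness and the erasure randomness into a single per-sender event. In a given slot, an active sender $j$ transmits with probability $q$, and independently its link to the receiver is clean with probability $1-p$; the receiver ``hears'' sender $j$ in that slot exactly when both occur, an event of probability $q_e = q(1-p)$, independent across senders and across slots. Since this scheme treats every collision as a loss, a slot produces a successful reception --- hence exactly one new ACK --- if and only if \emph{exactly one} of the currently active senders is heard. (At most one ACK per slot is ever possible: either nobody is heard, or one sender is heard cleanly, or two or more collide and nothing is decoded.)

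Next, condition on the phase between the $(k-1)$-st and the $k$-th ACK. Because a sender keeps transmitting only until it is acknowledged, exactly $m := n-k+1$ senders are active throughout this phase, and by symmetry their identities are irrelevant. In each slot of the phase, the probability of generating the $k$-th ACK is $\binom{m}{1} q_e (1-q_e)^{m-1} = m\, q_e (1-q_e)^{m-1}$, the same for every slot and independent of the past. Hence $X_k$ is geometric with this success parameter, so $\mathbb{E}[X_k] = \frac{1}{(n-k+1)\, q_e (1-q_e)^{n-k}}$. Summing over $k$ via \eqref{eq:tdx} and linearity of expectation (no independence among the $X_k$ is needed), and reindexing with $j = n-k+1$, turns $\sum_{k=1}^n \frac{1}{(n-k+1)\, q_e (1-q_e)^{n-k}}$ into $\sum_{j=1}^n \frac{1}{j\, q_e (1-q_e)^{j-1}}$, which is the claimed expression.

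The one step that requires care is the claim that the per-slot success probability depends on the history only through the current count $m$ of active senders --- i.e.\ that the chain really is memoryless with no hidden state. This follows from the independence of the transmission and erasure coin flips across slots, together with the fact that acknowledged senders fall silent, so the set of active senders is exactly the complement of the set already ACKed and the symmetry across senders makes only its cardinality relevant.
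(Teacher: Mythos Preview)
Your proposal is correct and follows essentially the same route as the paper: reduce the per-slot dynamics to an effective Bernoulli event with probability $q_e=q(1-p)$ per active sender, argue that $X_k$ is geometric with success probability $(n-k+1)q_e(1-q_e)^{n-k}$, and sum via linearity using \eqref{eq:tdx} after reindexing. The only differences are cosmetic---you index by $k$ and then substitute $j=n-k+1$, whereas the paper works directly with $X_{n-k+1}$---and you spell out the memorylessness and symmetry justifications more explicitly than the paper does.
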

\begin{proof}
If a sender decides to transmit in a given slot, then it might still experience an erasure with
probability $p$. Hence, the effective access probability of a sender is given by $q_e = q(1-p)$.

Consider the interval corresponding to $X_{n-k+1}$. In this interval, there are $k$ unacknowledged
senders. Therefore, at each time slot, the number of senders that the receiver can hear from
follows a binomial distribution with parameters $(k, q_e)$. A successful reception occurs when
exactly one sender is connected, which happens with probability $kq_e(1-q_e)^{k-1}$. Thus,
$X_{n-k+1}$ is a geometric random variable with mean $(kq_e(1-q_e)^{k-1})^{-1}$. The result follows
from Eqn. (\ref{eq:tdx}).
\end{proof}

\begin{corollary}
By selecting the access probability $q=\frac1n$, we get \[\mathbb{E}[T_D(n)]=O(n \log n).\]
\end{corollary}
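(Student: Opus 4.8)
The plan is to start from the closed-form expression in Theorem~\ref{thm:randaccess} and simply substitute $q = \tfrac1n$, so that $q_e = q(1-p) = \tfrac{1-p}{n}$. Pulling the constant factor $\tfrac{n}{1-p}$ out of the sum gives
\[
\mathbb{E}[T_D(n)] \;=\; \frac{n}{1-p}\sum_{k=1}^{n}\frac{1}{k\left(1-\frac{1-p}{n}\right)^{k-1}}.
\]
The key observation is that the exponential-decay term in the denominator is bounded away from zero uniformly over the whole range $1\le k\le n$. Since $0\le 1-p\le 1$ we have $1-\tfrac{1-p}{n}\ge 1-\tfrac1n$, and since the exponent $k-1$ never exceeds $n-1$,
\[
\left(1-\frac{1-p}{n}\right)^{k-1} \;\ge\; \left(1-\frac1n\right)^{n-1} \;\ge\; \frac1e ,
\]
where the last step uses the standard inequality $\ln(1+x)\le x$ applied to $x=\tfrac{1}{n-1}$, which shows $(1-\tfrac1n)^{n-1}$ stays above $1/e$.

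Substituting this bound reduces the remaining sum to the harmonic series:
\[
\mathbb{E}[T_D(n)] \;\le\; \frac{e\,n}{1-p}\sum_{k=1}^{n}\frac1k \;=\; \frac{e\,n}{1-p}\,H_n ,
\]
and since $p$ is a fixed constant with $p<1$ and $H_n=\Theta(\log n)$, the right-hand side is $O(n\log n)$, as claimed. If one wants a matching lower bound, bounding $\left(1-\tfrac{1-p}{n}\right)^{k-1}\le 1$ gives $\mathbb{E}[T_D(n)]\ge \tfrac{n}{1-p}H_n$, so in fact $\mathbb{E}[T_D(n)]=\Theta(n\log n)$.

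There is no real obstacle here: the only point requiring a moment's care is checking that the lower bound $\left(1-\tfrac{1-p}{n}\right)^{k-1}\ge c$ holds with an absolute constant $c>0$ independent of both $k$ and $n$ — this uniformity is exactly what justifies pulling it out of the summation — and everything else is a routine substitution followed by the elementary estimate $H_n=O(\log n)$.
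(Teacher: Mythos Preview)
Your argument is correct. The paper states this corollary without proof, evidently regarding it as immediate from Theorem~\ref{thm:randaccess}; your substitution $q_e=(1-p)/n$, the uniform lower bound $(1-q_e)^{k-1}\ge(1-1/n)^{n-1}\ge 1/e$, and the reduction to the harmonic sum are exactly the natural way to fill in the details, and you even supply the matching $\Omega(n\log n)$ bound.
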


\subsection{ZigZag decoding}
Next, we consider the scenario where the receiver has ZigZag decoding capability. In this scenario,
every sender transmits its packet in every slot until acknowledged by the receiver.

With ZigZag decoding, there are multiple ways to acknowledge a packet. The conventional method is
to ACK a packet when it is decoded. However, we propose a new ACK mechanism that is not based on
decoding. The key observation is that upon receiving an equation (collision), the receiver can
afford to ACK \emph{any one} of the senders involved in that collision.

In the following theorem, we show that this form of acknowledgments will still ensure that every
packet is correctly decoded by the receiver eventually.

\begin{theorem}\label{thm:correctness}
  Consider a single-hop network with $n$ senders and one receiver capable of performing ZigZag
decoding. Suppose the receiver, upon a reception, acknowledges an arbitrary sender among those
involved in the collision. At the point when the receiver sends the $n^{th}$ ACK, it can
successfully decode all $n$ packets.
\end{theorem}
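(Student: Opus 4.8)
The plan is to show that the $n$ collision equations that trigger the $n$ acknowledgments form, after an appropriate reordering of rows and columns, a triangular linear system with nonzero diagonal, so that the corresponding transfer matrix $H$ in the representation $C=HS$ of Section~\ref{abstraction_sec} is invertible over the field of rational functions in $D$; inverting $H$ then recovers the signals $S_1(D),\dots,S_n(D)$ of all $n$ packets, and demodulation and channel decoding finish the job.

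First I would nail down the combinatorial bookkeeping. Since an acknowledged sender stops transmitting, every collision heard by the receiver involves only senders that have not yet been acknowledged, and it involves at least one such sender; hence each reception lets the receiver acknowledge exactly one \emph{new} sender, and there are precisely $n$ receptions, occurring at the times $T_1<\cdots<T_n$. (In a slot where every active sender is erased, no reception occurs, the slot simply elapses, and no equation or ACK is produced, so such slots do not affect this count.) Let $s_k$ be the sender acknowledged at time $T_k$ and $E_k(D)$ the collision equation received at that time. Then $s_1,\dots,s_n$ is a permutation of all $n$ senders, and just before time $T_k$ the set of still-unacknowledged senders is exactly $S_k=\{s_k,s_{k+1},\dots,s_n\}$, with $S_{k+1}=S_k\setminus\{s_k\}$.

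Next I would read off the structure of $H$. By the algebraic representation, $E_k(D)$ is an $\mathbb{F}_q(D)$-linear combination of the packet signals of the senders contributing to that collision, all of which lie in $S_k$; in particular the coefficient of any sender $s_j$ with $j<k$ is zero. Moreover $s_k$ does contribute to $E_k$ (it is among the senders ``involved in the collision''), and since in this section each sender transmits a single uncoded packet, the coefficient of $S_{s_k}(D)$ in $E_k(D)$ is $h^{(T_k)}_{s_k}D^{\,u^{(T_k)}_{s_k}}\neq 0$. Ordering the rows of $H$ by $E_1,\dots,E_n$ and the columns (unknowns) by $S_{s_1},\dots,S_{s_n}$, the matrix $H$ is therefore triangular with nonzero diagonal, so $\det H=\prod_{k=1}^{n}h^{(T_k)}_{s_k}D^{\,u^{(T_k)}_{s_k}}\neq 0$ and $H$ is invertible. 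Solving $C=HS$ recovers all packet signals, which proves the claim.

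The only real obstacle is getting the bookkeeping of the middle step exactly right: one must argue carefully that acknowledged senders are genuinely silent, that ``involved in the collision'' means a nonzero contribution, and hence that the active-sender sets are nested with $S_{k+1}=S_k\setminus\{s_k\}$. Once that is pinned down, the triangularity of $H$, and with it the decodability of all $n$ packets at the time of the $n$-th ACK, is immediate.
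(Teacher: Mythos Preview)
Your argument is correct. The paper takes a somewhat different route: it defines $D_k$ (packets decoded by time $T_k$) and $A_k$ (packets ACKed by time $T_k$), and first proves the inclusion $D_k\subseteq A_k$ by a counting argument (if $|D_k|=m$ then among the first $k$ receptions there are $m$ equations in only those $m$ unknowns, each of which triggered an ACK to a distinct sender in $D_k$). From $D_k\subseteq A_k$ it follows that every reception involves at least one undecoded packet and is therefore ``innovative'' in the sense of Section~\ref{abstraction_sec}; hence after $n$ ACKs the receiver has $n$ linearly independent equations. Your approach is more direct and more elementary: by ordering the unknowns according to the ACK sequence $s_1,\dots,s_n$ you exhibit $H$ as triangular with nonzero diagonal, so invertibility is immediate without appealing to the large-field-size innovation assumption at all. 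The paper's route, on the other hand, isolates the lemma $D_k\subseteq A_k$, which it reuses verbatim in the streaming setting (Theorem~\ref{decodability_single}) where packets are coded and the clean triangular structure is no longer available.
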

\begin{proof}
  Let $D_k$ be the set of packets that have been decoded at time $T_k$, i.e., immediately after
sending the $k^{th}$ ACK. Also, let $A_k$ be the set of packets that have been ACKed at time $T_k$
including the $k^{th}$ ACK. We shall show that $D_k \subseteq A_k$ for all $k=1, 2, \ldots n$.

For any $k=1, 2, \ldots n$, let $|D_{k}|=m$. This means, among the first $k$ receptions, there are
at least $m$ linearly independent equations involving only these $m$ packets (from Section
\ref{abstraction_sec}). For every reception, the receiver always ACKs exactly one of the senders
involved in the collision. This means, corresponding to these $m$ equations, $m$ ACKs were sent by
the receiver to a set of senders within $D_{k}$.

 An ACKed sender never transmits again. Since the receiver always ACKs one of the senders involved
in a collision, no sender will be ACKed more than once. Hence, these $m$ ACKs are sent to $m$
distinct senders in $D_{k}$. This means all senders in $D_k$ have been ACKed.

  We have shown that $D_k \subseteq A_k$ for all $k=1, 2, \ldots n$. A sender that has been ACKed
will not transmit again. Hence, every reception will only involve senders whose packet has not been
decoded. This implies that every reception is innovative, since a reception is innovative if and
only if it involves at least one sender whose packet has not yet been decoded (see Section
\ref{abstraction_sec}).

  Therefore, at the point of sending the $n^{th}$ ACK, the receiver has $n$ linearly independent
equations in $n$ unknowns, and hence can decode all the packets.
\end{proof}

We shall now derive the expected delivery time for ZigZag decoding.

\begin{theorem}\label{thm:delay_zz}
For ZigZag decoding, the expected delivery time is given by:
  \[\mathbb{E}[T_D(n)] = \sum_{k=1}^n \frac 1{1-p^k} = n+O(1).\]
\end{theorem}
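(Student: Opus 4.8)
The plan is to decompose the delivery time through Eqn.~(\ref{eq:tdx}) as $T_D(n)=\sum_{k=1}^n X_k$ and to identify each inter-ACK interval $X_k$ as a geometric random variable, leaning on the correctness guarantee of Theorem~\ref{thm:correctness}.

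First I would examine the interval $X_{n-k+1}$, during which exactly $k$ senders remain unacknowledged: by that point $n-k$ ACKs have been sent, and by Theorem~\ref{thm:correctness} no sender is ACKed more than once, so precisely $k$ senders are still transmitting. Each of these $k$ senders broadcasts its packet in every slot, and each transmission is erased independently with probability $p$. Hence in a given slot the receiver fails to hear from all $k$ senders with probability $p^k$, so with probability $1-p^k$ it obtains at least one (possibly colliding) reception. By Theorem~\ref{thm:correctness} such a reception is innovative, so the receiver immediately issues the $(n-k+1)$st ACK to one of the involved senders. Consequently $X_{n-k+1}$ is geometric with success probability $1-p^k$, giving $\mathbb{E}[X_{n-k+1}]=1/(1-p^k)$. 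Summing over $k=1,\dots,n$ and reindexing yields $\mathbb{E}[T_D(n)]=\sum_{k=1}^n \frac{1}{1-p^k}$.

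For the asymptotic estimate I would write $\frac{1}{1-p^k}=1+\frac{p^k}{1-p^k}$ and bound $\frac{p^k}{1-p^k}\le \frac{p^k}{1-p}$, so that $\sum_{k=1}^n \frac{1}{1-p^k}=n+\sum_{k=1}^n \frac{p^k}{1-p^k}\le n+\frac{1}{1-p}\sum_{k=1}^{\infty}p^k=n+\frac{p}{(1-p)^2}$, which is $n+O(1)$ since $p<1$ is a fixed constant.

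There is no deep obstacle here; the one point requiring care is the claim that each slot of the interval constitutes an independent Bernoulli$(1-p^k)$ trial for ``an ACK is sent this slot.'' This is exactly where Theorem~\ref{thm:correctness} enters: without the guarantee that every reception is innovative, a reception could occur yet fail to advance the ACK count. Independence across slots then follows directly from the i.i.d.\ erasure assumption on the links, completing the identification of $X_{n-k+1}$ as geometric.
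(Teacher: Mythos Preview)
Your proof is correct and follows essentially the same approach as the paper: decompose $T_D(n)$ into inter-ACK intervals, use Theorem~\ref{thm:correctness} to argue that every reception during the interval with $k$ unacknowledged senders is innovative so that the interval is geometric with parameter $1-p^k$, and then bound $\sum_k \frac{p^k}{1-p^k}\le \frac{p}{(1-p)^2}$ for the $n+O(1)$ estimate. The only cosmetic difference is the indexing (you work with $X_{n-k+1}$ while the paper works with $X_{k+1}$), and you make the independence justification slightly more explicit than the paper does.
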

\begin{proof}
  At time $T_k$, $k$ distinct senders have been ACKed, and only $(n-k)$ senders will attempt
transmission. From the proof of Theorem \ref{thm:correctness}, every collision at the receiver will
result in an innovative linear combination. Hence, an innovative reception occurs if and only if
not all of the $(n-k)$ senders experience an erasure. The time to receive the next innovative
packet, $X_{k+1}$, is thus a geometric random variable with mean $1/(1-p^{n-k})$. Now, by Eqn.
\ref{eq:tdx}, we obtain the following:

  \begin{eqnarray*}
    \mathbb{E}[T_D(n)]&=&\sum_{k=1}^n \frac 1{1-p^k} = n+\sum_{k=1}^n\frac{p^k}{1-p^k}\\
    &\le &n+\frac1{1-p}\sum_{k=1}^n p^k \le n+\frac{p}{(1-p)^2} =n+O(1).
  \end{eqnarray*}
\end{proof}

Let us now compare this scheme with a centralized scheduling mechanism. Centralized scheduling
requires a central controller that assigns every time slot to a single sender, and achieves a
delivery time of $n/(1-p)$. In contrast, in the ZigZag-based approach, no coordination is necessary
among the senders, and yet, the delivery time is $n+O(1)$, that is close to the lowest possible
time of $n$ slots, required to deliver $n$ packets.

Such an improvement in performance can be explained as follows. For centralized scheduling, since
only one user is scheduled to transmit in a time-slot, the time-slot will be wasted from the
receiver's point of view, with probability $p$. In contrast, with ZigZag, since all the
unacknowledged senders attempt to access the channel in a given slot, we obtain a \emph{diversity}
benefit -- if even one of the attempting senders does not experience an erasure, the slot is useful
to the receiver.
\subsection{ZigZag decoding with random access}
The earlier subsection assumed that a collision of any number of packets can be treated as a linear
equation involving those packets. The largest number of packets that can be allowed to collide for
ZigZag decoding to still work depends on the range of the received Signal-to-Noise Ratio (SNR). In
practice, if a collision involves more than 3 or 4 packets, then the ZigZag decoding process is
likely to fail, owing to error propagation.

Hence, in a more realistic setup, we need to limit the level of contention in order to ensure that
more collisions at the receiver are useful. In this part of the paper, we explore the possibility
of combining ZigZag decoding with random access. Instead of allowing every unacknowledged sender to
transmit, each sender opportunistically transmits its packet with some probability $q$. Thus, the
expected number of transmitting senders is reduced, which in turns limits the expected number of
colliding packets in one time slot. We assume that any collision involving more than $C$ packets is
not useful. This scheme is expected to perform better than conventional random access with no
ZigZag decoding, since a collision of $C$ or fewer packets is not useless, but is treated as one
received linear equation. Under this assumption, we can derive the expected delivery time in a
manner similar to the analysis of simple random access.

\begin{theorem}\label{thm:randaccess2}
The expected delivery time for the random access scheme with an access probability $q$ is given by:
$$\mathbb{E}[T_D(n)]=\sum_{k=1}^n\frac{1}{\sum_{m=1}^{\min(C,k)} {k\choose m}   q_e^m(1-q_e)^{k-m}},$$
where $q_e = q(1-p)$ is the effective probability of access, after incorporating the erasures.
\end{theorem}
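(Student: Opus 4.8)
The plan is to replay the argument behind Theorem~\ref{thm:randaccess}, with the success event of a slot now being ``the number of senders the receiver hears lies between $1$ and $C$'' instead of ``exactly one sender is heard.'' First I would reinstate the two basic ingredients. A sender that elects to transmit still suffers an independent erasure with probability $p$, so the effective per-sender access probability is $q_e=q(1-p)$; hence in a slot during which there are $k$ unacknowledged (still transmitting) senders, the number $m$ actually heard by the receiver is $\mathrm{Binomial}(k,q_e)$, independently across slots.

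Next I would argue that the count of unacknowledged senders is deterministic over the interval of interest, which is where the bookkeeping of Theorem~\ref{thm:correctness} is reused. Whenever the receiver hears a collision of $m$ packets with $1\le m\le C$, it treats it as one innovative equation and ACKs exactly one of the $m$ involved senders; collisions of more than $C$ packets are discarded and trigger no ACK, so they do not enter the accounting. An ACKed sender never transmits again, so no sender is ACKed twice, and after the $j$-th ACK exactly $j$ distinct senders have been silenced. Consequently, throughout the interval $X_{n-k+1}$ (between the $(n-k)$-th and $(n-k+1)$-th ACKs) there are exactly $k$ unacknowledged senders, each independently accessing with probability $q$.

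With $k$ fixed, a slot is \emph{useful} -- i.e.\ it produces the next ACK -- precisely when $1\le m\le\min(C,k)$, an event of probability
$$P_k \;=\; \sum_{m=1}^{\min(C,k)} \binom{k}{m}\, q_e^m (1-q_e)^{k-m},$$
independently across slots, so $X_{n-k+1}$ is geometric with mean $1/P_k$. Summing via (\ref{eq:tdx}) and reindexing ($k\mapsto n-k+1$) gives
$$\mathbb{E}[T_D(n)] \;=\; \sum_{k=1}^n \mathbb{E}[X_k] \;=\; \sum_{k=1}^n \frac{1}{P_k},$$
which is the claimed formula. As a sanity check, setting $q=1$ (every unacknowledged sender always transmits) forces $C\ge$ the relevant collision sizes irrelevant only in the uncapped limit; in that uncapped limit $q_e=1-p$ and $P_k=1-p^k$, recovering Theorem~\ref{thm:delay_zz}.

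The main obstacle is conceptual rather than computational: one must verify that the correctness invariant of Theorem~\ref{thm:correctness} genuinely survives the capping rule, i.e.\ that silently dropping collisions of more than $C$ packets never forces the receiver into a state where a useful reception fails to be innovative or where two ACKs go to the same sender. Once that is confirmed, the remaining step is the routine geometric-sum computation, and note that the $X_k$'s need not be mutually independent for the sum-of-expectations step, since only linearity of expectation is invoked.
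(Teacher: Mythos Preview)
Your proposal is correct and follows essentially the same route as the paper: fix the interval $X_{n-k+1}$, observe that the number of senders heard is $\mathrm{Binomial}(k,q_e)$ with $q_e=q(1-p)$, declare a slot useful iff that count lies in $\{1,\ldots,\min(C,k)\}$, conclude that $X_{n-k+1}$ is geometric with mean $1/P_k$, and sum via linearity of expectation. Your additional remarks---the explicit check that the ACK invariant of Theorem~\ref{thm:correctness} survives the capping rule, the note that independence of the $X_k$'s is unnecessary, and the sanity check recovering Theorem~\ref{thm:delay_zz}---go beyond what the paper spells out but are entirely compatible with its argument.
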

\begin{proof}
  Consider the interval corresponding to $X_{n-k+1}$. In this interval, there are $k$ unacknowledged senders. Therefore, as in Theorem \ref{thm:randaccess}, at each time slot, the number of senders that the receiver can hear from follows a binomial distribution with parameters $(k, q_e)$, where $q_e$ is the effective access probability of a sender, given by $q_e = q(1-p)$.

  A successful reception occurs when $C$ or fewer senders is connected, which happens with probability \[p_k=\sum_{m=1}^{\min(C,k)} {k\choose m}  q_e^m(1-q_e)^{k-m}.\]
  Thus, $X_{n-k+1}$ is a geometric random variable with mean $1/p_k$. Using Eqn. \ref{eq:tdx}, we obtain the desired result.
\end{proof}

The design parameter $q$ should be chosen so as to minimize the delivery time. Unfortunately, the
exact characterization of the optimal $q$ in closed form seems difficult to obtain. In the
following section, we compare the expected delivery time for the above schemes, with the optimal
values of $q$ computed numerically.

\subsection{Numerical results}
\begin{figure}
\centering
  \includegraphics[width=.4\textwidth]{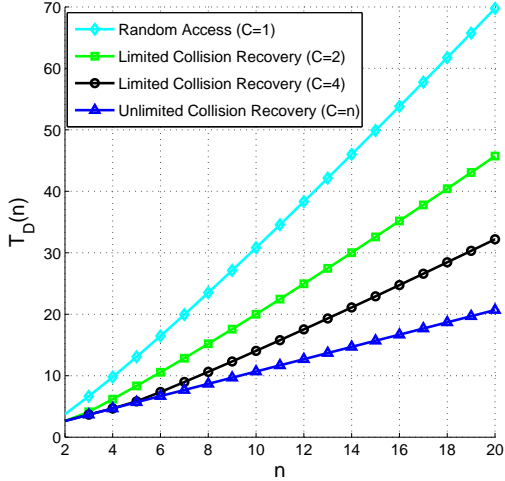}\\
  \caption{The delivery time for $p=1/3$ for different schemes}\label{delivery_zz}
\end{figure}

Fig. \ref{delivery_zz} shows the expected delivery time for the different schemes discussed above,
as a function of the number of senders $n$. The plot compares random access with ZigZag decoding
for different values of the contention limit $C$, which is the maximum number of packets that can
be allowed to collide for the collision to be considered useful.

The contention level is controlled by adjusting the access probability $q$. In the unlimited ZigZag
case, i.e., when we have no contention limit,  there is no need to reduce contention through random
access, and hence $q$ is set to 1. For the other cases, for each $n$, the value of $q$ is chosen so
as to minimize the delivery time.

The main observation is that by allowing ZigZag decoding, the expected delivery time is
significantly reduced, as compared to conventional random access where any collision is treated as
being useless.

We also  observe that the delivery time drops with an increase in the contention limit $C$. In the
unlimited ZigZag case, we can see that the delivery time is very close to the best possible time of
$n$ slots.

The value of the erasure probability $p$, is fixed at 1/3. However, we found that varying the value
of $p$ does not significantly affect the delivery time for the other schemes. In contrast, the plot
for the centralized scheduling case (not shown in the figure), would be a straight line with slope
$1/(1-p)$. In other words, the delivery time for centralized scheduling is sensitive to $p$.

Intuitively, the reason is, the random access approaches are allowed to change the access probability to reach a certain level of contention at the receiver. as the erasure probability $p$ increases, the senders can compensate by increasing their access probability $q$ to achieve the same contention level.

\section{Stability region for the single receiver case}\label{throughput_sec}
In this section, we consider a scenario when packets arrive at sender $i$ according to an arrival
process $A_i(t)$, where $A_i(t)$ represents the number of packets entering the $i^{th}$ sender's
queue at slot $t$ (cf. Fig. \ref{stream_fig}). We assume the arrival processes are
\emph{admissible} as defined in \cite{neely}.

\begin{assumption}\label{assump:arrival}
 The arrival processes satisfy the following conditions.
following conditions.
\begin{enumerate}
  \item  $\lim_{t \rightarrow \infty} \frac1t \sum_{\tau = 0}^{t-1}\sum\mathbb E[A_i(t)] =
  \lambda_i$.
  \item There exists a finite value $A_{max}$ such that $\mathbb E[A_i^2(t)| \mathcal H(t)] \leq
  A_{max}^2$ for all $i$ and  $t$, where $\mathcal H(t)$ denotes the history up to time $t$.
  \item For any $\delta >0$, there exists an interval of size $T$ such that for any initial slot $t_0$
  $$\mathbb E\bigg[ \frac1T \sum_{\tau = 0}^{T-1} A_i(t_0 + \tau) | \mathcal H(t_0)\bigg] \leq \lambda_i +\delta, \quad \foral i.   $$
\end{enumerate}
\end{assumption}

The above conditions are easily satisfied if the arrival processes are Bernoulli processes with
mean $\lambda_i$. Let $\mu_i(t)$ be the number of packets dropped from the queue of the $i^{th}$
sender during time slot $t$. According to the communication protocol described in Section
\ref{model_sec}, a packet is dropped from a sender's queue if and only if it is acknowledged by all
the receivers connected to that sender. We also assume that the $A_i(t)$ arrivals occur at the end
of slot $t$. Thus, the evolution of $Q_i(t)$, the queue-length at sender $i$ at time $t$, is given
by
\begin{equation}\label{q_dynamic}
    Q_i(t+1) = \max \{Q_i(t) - \mu_i(t), 0\} + A_i(t).
\end{equation}

The goal is to characterize the stability region, which is defined as the closure of the set of
arrival rates for which there exist a service policy such that the each queue has a bounded time
average, i.e.,
$$\limsup_{t \rightarrow \infty} \frac1t \sum_{\tau = 0}^{t-1} \mathbb E[Q_i(\tau)] < \infty, \quad \foral i.$$

\begin{figure}
\centering
  \includegraphics[width=.45\textwidth]{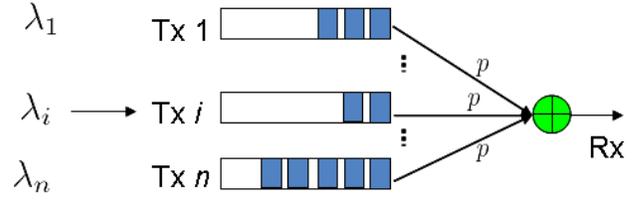}\\
  \caption{Single-hop network with $n$ senders and one receiver -- streaming case}\label{stream_fig}
\end{figure}

A  \emph{centralized scheduling policy} involves choosing at most one of the senders for transmission (service) so that any collision is avoided. If the packet is delivered successfully at the receiver, an acknowledgment is fed back to the sender and that packet is dropped from the sender's queue. The centralized scheduler requires coordination among the senders as well as information about the queue-length or the arrival rates. However, it does not have access to
channel state before it is realized. Therefore, probability of packet loss is independently at least $p$ at every time slot, and it is also independent of the implemented centralized scheduling policy. Thus, we have the following \emph{necessary} conditions for the stability region:
\begin{eqnarray}
   \sum_{i=1}^n \lambda_i &<& 1-p, \nonumber \\
    \lambda_i &\geq& 0, \quad i = 1, \ldots, n. \label{centralized_region}
\end{eqnarray}

In fact, it can be shown that the above conditions are also sufficient. The queues can be
stabilized by a centralized scheduling policy that selects the sender with the longest queue for
transmission \cite{neely}. In summary the stability region for centralized scheduling policies is
an $n$-dimensional simplex given by (\ref{centralized_region}). An example of such region for a
two-sender system is illustrated in Fig. \ref{region_fig}(a).

\begin{figure}
\centering
  \includegraphics[width=.45\textwidth]{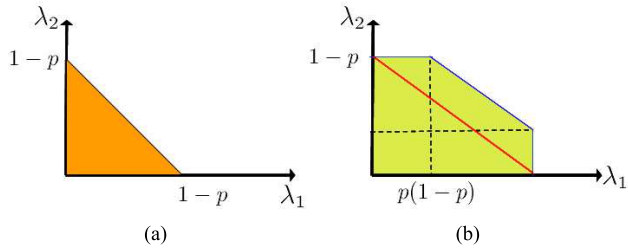}\\
  \caption{Stability region of a two-sender case with (a) centralized scheduling (b) ZigZag decoding.}\label{region_fig}
\end{figure}

Note that under the centralized scheduling policy the assigned sender may experience an erasure, and hence, wastes time slots even if there are other senders that would not have suffered an erasure. However, if the realization of the channel state in the next time slot is known, such wastes can be avoided by choosing the transmitter from those that are connected to the receiver. Tassiulas and Ephremides \cite{LCQ} show that if information about channel state realization is available a priori, the following set of arrival rates can be stabilized:
\begin{eqnarray}
   \sum_{i \in S} \lambda_i &<& 1-p^{|S|}, \quad \textrm{for all } S \subseteq \{1,\ldots, n\}, \nonumber \\
    \lambda_i &\geq& 0, \quad i = 1, \ldots, n, \label{MAC_region}
\end{eqnarray}
where $|S|$ denotes cardinality of set $S$. The region described in (\ref{MAC_region}) can be achieved by serving the sender with longest queue-length among those that are connected to the receiver. Moreover, Tassiulas and Ephremides \cite{LCQ} show that it is not possible to stabilize the queues for any point outside the region described in (\ref{MAC_region}). This can be seen as a consequence of \emph{Cut-Set bound} (cf. \cite{cover_book}) applied to this setup. The stability region for a two-sender system is illustrated in Fig. \ref{region_fig}(b).

In the following, we first show how to use ZigZag decoding scheme to achieve the dominant face of the stability region given in (\ref{MAC_region}) without prior knowledge about channel state realizations. We then show that as long as the sender side queues are stable, the receiver will eventually decode every packet that arrives at any sender.

\begin{definition}\label{priority_policy}
The \emph{priority-based} policy for a single-hop network with a single receiver is as follows. Fix a priority order of the senders with 1 being the highest priority.
\begin{itemize}
  \item \textbf{Transmission mechanism}: Each sender transmits the head-of-line packet of its queue at every time slot.
  \item \textbf{Acknowledgement mechanism}: Upon every reception, the receiver acknowledges the packet from
  the sender with highest priority among those packets that are involved in the collision.
  Consequently, each acknowledged packet is dropped from the corresponding sender's queue, i.e., $\mu_i(t) =
  1$ for sender $i$ if and only if $i$ is the highest priority sender with the following two properties: $Q_i(t) \neq 0$, and the link from sender $i$ to the receiver is
  not experiencing an erasure.
\end{itemize}

\end{definition}

In the following, we show the priority-based policy can achieve vertices of the stability region given by (\ref{MAC_region}). First, let us provide a simple characterization of the vertices of the dominant face of the region.

\begin{lemma}\label{vertex_lemma}
There exists a one-to-one correspondence between permutations of $\{1,\ldots,n\}$ and vertices of the dominant face of the region described in (\ref{MAC_region}). In particular, for any permutation $\pi$, the corresponding vertex is given by
$$\lambda_{\pi_i} = (1-p)p^{i-1}, \quad i = 1, \ldots, n.$$
\end{lemma}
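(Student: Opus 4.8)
The plan is to recognize the set described in (\ref{MAC_region}) as a polymatroid and its dominant face as the associated base polytope, after which Edmonds' greedy characterization of vertices yields the claimed formula at once. Throughout I write $[n] = \{1,\ldots,n\}$ and use $0 < p < 1$.

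First I would check that the set function $f : 2^{[n]} \to \mathbb{R}$ given by $f(S) = 1 - p^{|S|}$ is a normalized, nondecreasing, submodular rank function: $f(\emptyset) = 0$; $f$ is nondecreasing since $p < 1$; and $f$ is submodular because it depends on $S$ only through $|S|$ and the univariate map $g(k) = 1 - p^k$ is concave (its increments $g(k{+}1)-g(k) = p^k(1-p)$ decrease in $k$) — for a cardinality-based set function, submodularity is exactly concavity of $g$. Hence the closure $P$ of the set in (\ref{MAC_region}) equals the polymatroid $\{\lambda \ge 0 : \lambda(S) \le f(S)\text{ for all }S \subseteq [n]\}$. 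Because $f$ is nondecreasing, a point of $P$ is Pareto-maximal — i.e. lies on the dominant face — if and only if the single constraint $S = [n]$ is tight; this is the standard unsaturated-coordinate argument. So the dominant face is the base polytope $B(f) = \{\lambda \in P : \lambda([n]) = 1 - p^n\}$.

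Next I would invoke the classical fact that the vertices of $B(f)$ are exactly the greedy points $v^\pi$, defined for a permutation $\pi = (\pi_1,\ldots,\pi_n)$ of $[n]$ by $v^\pi_{\pi_i} = f(\{\pi_1,\ldots,\pi_i\}) - f(\{\pi_1,\ldots,\pi_{i-1}\})$. With $f(S) = 1 - p^{|S|}$ this evaluates to $v^\pi_{\pi_i} = (1-p^i) - (1-p^{i-1}) = (1-p)p^{i-1}$, exactly the vertex in the lemma. For a self-contained treatment I would verify: (a) $v^\pi \in B(f)$ — for any $S$ with $|S| = m$, $\lambda(S)$ is maximized by taking the $m$ largest coordinate values $(1-p)p^{j-1}$, $j=1,\ldots,m$, which sum to $1-p^m = f(S)$, so $v^\pi(S) \le f(S)$ with equality iff $S = \{\pi_1,\ldots,\pi_m\}$, while $v^\pi([n]) = 1-p^n$; (b) $v^\pi$ is a vertex — it makes tight the nested family $\{\pi_1,\ldots,\pi_m\}$, $m = 1,\ldots,n$, whose indicator vectors are linearly independent (successive differences are the unit vectors $e_{\pi_m}$), so $v^\pi$ is the unique feasible point meeting these $n$ facets; (c) conversely, every vertex of $B(f)$ admits a maximal chain $\emptyset = T_0 \subsetneq T_1 \subsetneq \cdots \subsetneq T_n = [n]$ of $f$-tight sets, and setting $\{\pi_i\} = T_i \setminus T_{i-1}$ one computes the vertex's $\pi_i$-th coordinate to be $f(T_i) - f(T_{i-1}) = (1-p)p^{i-1}$, i.e. the vertex equals $v^\pi$.

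Finally, since $0 < p < 1$ the values $(1-p)p^{i-1}$ are strictly decreasing in $i$, hence pairwise distinct, so $\pi$ is recovered from $v^\pi$ by sorting its coordinates in decreasing order; thus $\pi \mapsto v^\pi$ is injective and, together with (b)--(c), is a bijection onto the vertex set of the dominant face. The routine parts are the submodularity check, the greedy computation, and (a)--(b); the real work — if one does not simply cite Edmonds' polymatroid theorem — is step (c), showing that every vertex of the base polytope arises from a chain of tight sets, which is the standard but slightly delicate exchange argument.
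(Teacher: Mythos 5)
Your proposal is correct, and it is essentially the paper's approach: the paper's entire proof is a citation to the polymatroid extreme-point literature (Bixby--Cunningham--Topkis), and your argument simply supplies the standard content of that citation --- identifying $f(S)=1-p^{|S|}$ as a cardinality-based submodular rank function, the dominant face as the base polytope, and the vertices as the greedy points $v^\pi_{\pi_i}=f(\{\pi_1,\dots,\pi_i\})-f(\{\pi_1,\dots,\pi_{i-1}\})=(1-p)p^{i-1}$. The details you flag as routine or delicate (the concavity check, the tight-chain converse, and injectivity of $\pi\mapsto v^\pi$ from strict monotonicity of $(1-p)p^{i-1}$) are all sound.
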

\begin{proof}
See \cite{polymatroid_vertex}.
\end{proof}

\begin{theorem}\label{vertex_thm}
Consider a single-hop wireless erasure network with one receiver and $n$ senders, where the arrival
process $A_i(t)$ satisfies Assumption \ref{assump:arrival}. Any vertex on the dominant face of the
region given by (\ref{MAC_region}) can be achieved without prior knowledge about channel state
realization by employing the priority-based policy.
\end{theorem}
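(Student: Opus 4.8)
The plan is to show that, for a fixed priority order $\pi$, the priority-based policy stabilizes every arrival-rate vector $\lambda$ that is dominated coordinate-wise by the corresponding vertex of Lemma~\ref{vertex_lemma} (the one with $\lambda_{\pi_i}=(1-p)p^{i-1}$); this is the standard meaning of the policy ``achieving'' that corner of the closed region (\ref{MAC_region}), and the clause ``without prior knowledge of channel state'' is immediate since the policy never consults $c_{ij}(t)$. By relabelling the senders we may assume $\pi$ is the identity, so the target is $\lambda^*_i=(1-p)p^{i-1}$, and it suffices to take any $\lambda$ with $\lambda_i<(1-p)p^{i-1}$ for all $i$ and show each $Q_i(t)$ has a bounded time average under (\ref{q_dynamic}).

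First I would record a structural identity for the service process. Under the priority acknowledgement rule, in slot $t$ a packet is ACKed from sender $i$ if and only if $i$ is the smallest index with $Q_i(t)\neq 0$ and link $i$ not erased. Consequently, for every $k$,
$$\sum_{i=1}^{k}\mu_i(t)=\mathbf{1}\big[\exists\, i\le k:\ Q_i(t)\neq 0,\ c_i(t)\text{ not erased}\big]\ \le\ \mathbf{1}\big[\exists\, i\le k:\ c_i(t)\text{ not erased}\big],$$
and in particular $\mu_k(t)=\mathbf{1}[Q_k(t)\neq 0]\cdot\mathbf{1}[c_k(t)\text{ not erased}]\cdot\mathbf{1}[\sum_{i<k}\mu_i(t)=0]$.

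The crux is the next step: the effective service rate of queue $k$ is at least $\lambda^*_k$, uniformly over the state of all other queues. The point is that $c_k(t)$ is independent of the history $\mathcal H(t)$ and also of the indicator $\mathbf{1}[\sum_{i<k}\mu_i(t)=0]$, because by the identity above that indicator is a function of $Q_1(t),\dots,Q_{k-1}(t)$ — which are $\mathcal H(t)$-measurable — and of $c_1(t),\dots,c_{k-1}(t)$, all independent of $c_k(t)$. Hence on the event $\{Q_k(t)\neq 0\}$,
$$\mathbb{E}\big[\mu_k(t)\,\big|\,\mathcal H(t)\big]=(1-p)\Big(1-\mathbb{E}\big[\textstyle\sum_{i<k}\mu_i(t)\,\big|\,\mathcal H(t)\big]\Big)\ \ge\ (1-p)\big(1-(1-p^{k-1})\big)=(1-p)p^{k-1},$$
where the inequality uses the second bound of the structural identity together with the independence of $c_1(t),\dots,c_{k-1}(t)$ from $\mathcal H(t)$. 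So the departures seen by queue $k$ while it is nonempty arrive at rate at least $(1-p)p^{k-1}>\lambda_k$, regardless of what the other queues are doing; this decoupling is what makes an inductive argument over priority levels unnecessary.

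It remains to turn this into a stability proof, which is the only genuinely technical part. Choose $\delta>0$ with $\lambda_i+\delta<(1-p)p^{i-1}$ for every $i$, let $T$ be the window length of Assumption~\ref{assump:arrival}(3), and use $L(t)=\sum_i Q_i(t)^2$. A standard $T$-slot Lyapunov drift computation — iterating (\ref{q_dynamic}) over the window, bounding arrivals with Assumption~\ref{assump:arrival}(2)--(3), and applying the per-slot service bound of the previous step on the slots of the window during which $Q_k$ stays nonempty (all of them, once $Q_k(t_0)>(A_{max}+1)T$) — gives $\mathbb{E}[L(t_0+T)-L(t_0)\,|\,\mathcal H(t_0)]\le B-\varepsilon\sum_i Q_i(t_0)$ for some finite $B$ and some $\varepsilon>0$. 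The Lyapunov-drift criterion of \cite{neely} then yields $\limsup_{t\to\infty}\frac1t\sum_{\tau=0}^{t-1}\mathbb{E}[Q_i(\tau)]<\infty$ for all $i$, so all queues are stable. The main obstacle is this last step's bookkeeping: carrying the conditional service bound through a $T$-slot window while controlling the boundary terms from the $\max\{\cdot,0\}$ in (\ref{q_dynamic}) and the merely time-averaged (non-i.i.d.) arrivals — routine Neely-style analysis, but the only place where real estimates are needed.
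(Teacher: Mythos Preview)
Your proposal is correct and follows essentially the same route as the paper: establish that queue $k$ is offered service at rate at least $(1-p)p^{k-1}$ irrespective of the other queues' states, then apply the single-queue stability machinery of \cite{neely}. The paper's version is a touch more direct --- it simply observes that the event \{all links $j<k$ erased and link $k$ not erased\}, which has probability exactly $(1-p)p^{k-1}$ and is independent of the history, already suffices for an ACK to sender $k$, and then cites Lemma~3.6 of \cite{neely} outright rather than unpacking the $T$-slot Lyapunov drift --- but the substance is identical.
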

\begin{proof}
Fix a vertex, $V$, on the dominant face of the stability region. By Lemma \ref{vertex_lemma}, it
corresponds to a permutation $\pi$ of the senders. Without loss of generality, assume $\pi = (1, 2,
\ldots, n)$. By Lemma \ref{vertex_lemma}, the rate-tuple corresponding to $V$ is given by

\begin{equation}\label{vetex_rates}
\bar \lambda_{i} = (1-p)p^{i-1}, \quad i = 1, \ldots, n.
\end{equation}

Next, we show the priority-based policy defined in Definition \ref{priority_policy} can achieve the
vertex $V$, i.e., for any $\epsilon > 0$, the priority-based policy stabilizes the queues with
arrival rates
$$\lambda_{i} = (1-p)p^{i-1} - \epsilon, \quad i = 1, \ldots, n.$$

As we discussed in the acknowledgement mechanism of the priority-based policy, a sufficient
condition for acknowledging sender $i$ is to have the link of sender $i$ not erased and the links
of all other senders with higher priorities erased. Note that an acknowledgment to sender $i$ is
equivalent to serving the queue at sender $i$ by one packet. By independence of the erasures across
links we obtain the following expected service rate for each sender $i$
$$\mathbb E[\mu_i(t)] \geq p^{i-1} (1-p), \quad i = 1, \ldots, n.$$

Hence, by Definition 3.5 of \cite{neely}, the server process $\mu_i(t)$ is admissible with rate
$\bar \mu_i = p^{i-1} (1-p)$. Moreover, the arrival process $A_i(t)$ is also admissible with rate
$\lambda_i$ by Assumption \ref{assump:arrival}. Since $\bar \mu_i > \lambda_i$ for any $\epsilon
>0$, by Lemma 3.6 of \cite{neely} the sender side queues are stable. In other words, arrival rates
arbitrarily close to that of vertex $V$ can be achieved.
\end{proof}

\begin{corollary}
The dominant face of the stability region described in (\ref{MAC_region}) is achievable without
prior knowledge about channel state realization by employing the priority-based policy.
\end{corollary}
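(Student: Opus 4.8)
The plan is to leverage Theorem~\ref{vertex_thm} together with the structural fact supplied by Lemma~\ref{vertex_lemma}: the dominant face of the region~(\ref{MAC_region}) is a polytope whose extreme points are precisely the $n!$ rate-tuples $v_\pi$, one for each permutation $\pi$ of $\{1,\dots,n\}$. Since every polytope is the convex hull of its vertices, any rate-tuple $\lambda$ lying on the dominant face can be written as $\lambda=\sum_{\pi}\theta_\pi v_\pi$ for weights $\theta_\pi\ge 0$ with $\sum_\pi\theta_\pi=1$. The idea is then to time-share the priority-based policies that Theorem~\ref{vertex_thm} already shows achieve the individual vertices.

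Concretely, I would define a randomized policy as follows: at the start of each slot $t$, independently draw a permutation $\Pi(t)$ with $\pr[\Pi(t)=\pi]=\theta_\pi$, and during that slot run the transmission/ACK rule of Definition~\ref{priority_policy} using the priority order $\Pi(t)$. Fix a sender $j$ and let $r_\pi(j)$ be its rank under $\pi$. Exactly as in the proof of Theorem~\ref{vertex_thm}, a sufficient condition for $j$ to be acknowledged in a slot with order $\pi$ (when $Q_j\neq 0$) is that the link from $j$ is not erased while the links from all higher-priority senders are erased; by independence of the erasures and of $\Pi(t)$, the resulting potential service offered to sender $j$ equals $1$ with probability $\sum_\pi\theta_\pi(1-p)p^{\,r_\pi(j)-1}=\sum_\pi\theta_\pi (v_\pi)_j=\lambda_j$ in every slot, i.i.d.\ across slots. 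Since this indicator is bounded by $1$, the server process $\mu_j(t)$ is admissible with rate $\bar\mu_j=\lambda_j$ in the sense of Definition~3.5 of~\cite{neely}.

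Now take any arrival-rate vector with $\lambda_j-\epsilon$ for $\epsilon>0$. Then $\bar\mu_j=\lambda_j>\lambda_j-\epsilon$ for every $j$, so by Lemma~3.6 of~\cite{neely} the sender-side queues are stable under the time-shared priority policy. Hence every point of the dominant face of~(\ref{MAC_region}) is approachable arbitrarily closely by stabilizable arrival rates, which — recalling that the stability region is defined as a closure — is exactly the assertion that the dominant face is achievable by the priority-based policy (now applied with a randomized priority order).

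The one step that actually carries content is the first one: that the convex hull of the $n!$ tuples $v_\pi$ is \emph{precisely} the dominant face, so that time-sharing over these corner policies sweeps out the entire face. This is exactly the one-to-one correspondence of Lemma~\ref{vertex_lemma} (the vertex description of the base polytope of this polymatroid), and without it one could not conclude that these particular policies suffice. Everything after that is routine: the per-slot i.i.d.\ randomization is what makes the mixed service process land cleanly in Neely's admissible-server framework, so the same invocation of Definition~3.5 and Lemma~3.6 of~\cite{neely} used in Theorem~\ref{vertex_thm} closes the proof. One could alternatively time-share over long frames instead of per slot, at the cost of a slightly more involved drift estimate; the per-slot version keeps the argument minimal.
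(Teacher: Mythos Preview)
Your argument is correct and follows essentially the same route as the paper: write an arbitrary point on the dominant face as a convex combination of the vertices (Lemma~\ref{vertex_lemma}) and then time-share the corresponding priority-based policies from Theorem~\ref{vertex_thm}. Your version is more explicit than the paper's sketch in that you specify per-slot i.i.d.\ randomization over permutations and verify the resulting service process fits Neely's admissible-server framework, whereas the paper simply asserts that time-sharing works and adds the remark that only the receiver's ACK rule changes across policies, so no sender coordination is needed.
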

\begin{proof}
Every point on the dominant face of the stability region can be written as a convex combination of
the vertices of the dominant face. Moreover, each vertex can be achieved by a priority-based policy
given in Definition \ref{priority_policy}, corresponding to that vertex. Therefore, every point on
the dominant face can be achieved by time sharing between such policies. Note that the difference
between he policies achieving different vertices is in the acknowledgement mechanism which takes
place at the receiver, and no coordination among the transmitters is necessary.
\end{proof}

\begin{theorem}\label{decodability_single}
 For the priority-based policy, every packet that arrives at any sender will eventually get decoded
 by the receiver if it employs ZigZag decoding.
\end{theorem}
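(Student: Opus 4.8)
The plan is to prove the claim by induction on $i = n, n-1, \ldots, 1$ (that is, from the lowest-priority sender up to the highest), using only two facts: (i) under the priority-based policy the sender-side queues are stable (Theorem \ref{vertex_thm} and its corollary), so every packet ever enqueued is acknowledged, hence leaves its queue, after an almost surely finite number of slots; and (ii) by the acknowledgement rule of Definition \ref{priority_policy}, whenever the receiver ACKs sender $i$, no sender of higher priority than $i$ was heard in that collision, so the corresponding received equation mixes the head-of-line packet of sender $i$ only with head-of-line packets of senders of strictly \emph{lower} priority. Notably, this argument does not need the large-field/innovation assumption at all; it only uses that the channel gain of a link that is not erased is nonzero.

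First I would record the reduction from stability to per-packet finiteness. Since $\frac1t\sum_{\tau}\mathbb{E}[Q_i(\tau)]$ is bounded, a standard argument — a packet that is never served blocks the head of the queue forever, so arrivals accumulate behind it and the queue length grows linearly in time — shows that, almost surely, every packet enqueued at sender $i$ eventually becomes the head-of-line packet and is acknowledged within finitely many slots. (For non-vertex points of the dominant face one invokes the time-shared policy of the corollary, for which the same implication holds.)

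Assume without loss of generality that the priority order is $(1,2,\ldots,n)$, and prove by induction on $i = n,\ldots,1$ that, almost surely, every packet ever enqueued at sender $i$ is eventually decoded by the receiver. Base case $i=n$: an ACK to sender $n$ can only be sent in a slot in which none of $1,\ldots,n-1$ is heard, so the collision is sender $n$'s head-of-line packet alone, and by the model of Section \ref{abstraction_sec} the received signal is $C(D)=h\,D^{u}\,S(D)$ with $h\neq 0$, so $S(D)$ — hence the packet — is recovered immediately. Inductive step: assume the claim for $i+1,\ldots,n$, and fix a packet $P$ of sender $i$; by the stability remark it becomes head-of-line and is ACKed at some finite slot $t$. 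By the ACK rule the collision at slot $t$ involves $P$ together with the slot-$t$ head-of-line packets of some subset $S\subseteq\{i+1,\ldots,n\}$ of strictly lower-priority senders, so the received equation is $C(D)=h_i D^{u_i}S_P(D)+\sum_{j\in S}h_j D^{u_j}S_{P_j}(D)$ with $h_i\neq 0$. Each $P_j$ ($j\in S$) is a packet enqueued at a lower-priority sender, hence by the inductive hypothesis is decoded at some finite (random) slot; taking the maximum over the finitely many $j\in S$, after that slot the receiver knows all the $S_{P_j}(D)$, subtracts them from $C(D)$, and obtains $S_P(D)=h_i^{-1}D^{-u_i}\big(C(D)-\sum_{j\in S}h_j D^{u_j}S_{P_j}(D)\big)$; demodulating and channel-decoding $S_P(D)$ recovers $P$. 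This closes the induction, and applying it to every sender proves the theorem.

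The step I expect to require the most care is the stability-to-finiteness reduction: making it rigorous that a bounded time-average of $\mathbb{E}[Q_i(t)]$ forces \emph{every} individual packet to be served — and acknowledged — in almost surely finite time, and that this persists under the time-shared policies used for the non-vertex points of the dominant face. The algebraic interference-cancellation step is then routine given Section \ref{abstraction_sec}; the only thing to watch there is that the symbol offsets $u_j$ cause no difficulty, which is precisely the regime assumed in the model (packets much longer than the offsets).
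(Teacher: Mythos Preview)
Your proof is correct and takes a genuinely different route from the paper's. The paper argues by counting: every reception at the receiver involves only unacknowledged, hence undecoded, packets, so (invoking the large-field innovation assumption of Section~\ref{abstraction_sec}) every reception is an innovative equation; thus the number of degrees of freedom equals the number of ACKs, and since stability forces all queues to become \emph{simultaneously} empty, at that instant the receiver has as many independent equations as unknowns and can solve. Your argument instead exploits the priority structure directly: an ACK to sender $i$ certifies that the received collision mixes $i$'s head-of-line packet only with lower-priority packets, and by induction those are eventually decoded, so the stored collision can be cleaned to expose $S_P(D)$. This buys you two things the paper's proof does not have: you avoid the large-field innovation assumption entirely (only $h_i\neq 0$ is needed), and you avoid the somewhat delicate step of arguing that stability in the time-average sense of Assumption~\ref{assump:arrival} forces the \emph{joint} queue process to hit the all-empty state. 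In exchange, the paper's argument is shorter and reuses the machinery of Theorem~\ref{thm:correctness} verbatim. One simplification to your write-up: you can bypass the stability-to-finiteness reduction altogether by observing that, regardless of arrivals, the event \{all links $1,\ldots,i-1$ erased, link $i$ not erased\} has probability $p^{i-1}(1-p)>0$ independently each slot, so the head-of-line packet of sender $i$ is ACKed after a geometrically dominated, almost surely finite, wait; this gives the ``every packet is eventually ACKed'' conclusion directly and removes the step you flagged as most delicate.
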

\begin{proof}
  By the same arguments as in the proof of Theorem \ref{thm:correctness}, every packet that is
  decoded at the receiver must have been acknowledged. If a packet is acknowledged, it is never
  transmitted again. Therefore, every collision at the receiver only involves packets that have not
  yet been decoded. Thus, every successful reception at the receiver is innovative (see Section
  \ref{abstraction_sec}).

In other words, the receiver sends out an ACK when and only when it receives an innovative packet.
This means that the total number of packets that have been dropped from any sender's queue at a
given time is equal to the total number of degrees of freedom at the receiver.

By Theorem \ref{vertex_thm}, the queue at each sender is stable. Hence, all the queues will
eventually become simultaneously empty. If all the queues are empty at the same time, this means
the receiver has sent as many ACKs as the total number of packets that ever arrived at any sender
so far. As discussed above, the number of ACKs is equal to the total number of linearly independent
equations available at the receiver. In other words, the receiver has as many equations as the
unknowns, and can decode all of the packets that ever arrived at the senders.
\end{proof}

\begin{remark}
The priority-based policy requires knowledge of the arrival rates at the receiver to tune the
acknowledgement mechanism. However, if the senders' queue-length information is available at the
receiver, we can mimic the policy by Tassiulas and Ephremides \cite{LCQ} by acknowledging the
sender with the longest queue. Then, we shall not need to know the arrival rates. Achievability of
the stability region in (\ref{MAC_region}) is then a direct consequence of the results in
\cite{LCQ}.

It is worth mentioning that, if the probability of erasure is different on different links, this
scheme would still achieve the corresponding stability region based on the results of \cite{LCQ}.
\end{remark}

\section{Multiple receiver case}\label{multiple_sec}
In this section, we generalize the results of the preceding parts to the case of a single-hop
wireless erasure channel with multiple senders and receivers. Denote by $\Gamma_O(i)$ the set of
receivers that can potentially receive a packet from sender $i$, and write $\Gamma_I(j)$ for the
set of senders that can reach receiver $j$. Recall that the senders are constrained to
\emph{broadcast} the packets on all outgoing links. The goal of each sender is to deliver all the
packets in its queue to each of its neighbors. In the following we characterize the delivery time
and the stability region of the network for ZigZag decoding and compare the results with
centralized scheduling schemes.

\subsection{Delivery time characterization}
Similarly to Section \ref{delay_sec}, we study a scenario where every sender has a single packet to
deliver to all of its neighbors.

\begin{definition}
Consider a single-hop wireless erasure network with $m$ receivers and $n$ senders, each having one
packet to transmit. Define the \emph{delivery time} of receiver $j$, $T_D^{(j)}$, as the time taken
by receiver $j$ to successfully decode all packets transmitted from all senders in $\Gamma_I(j)$.
\end{definition}
%
%
%
%
%

%

A centralized scheduling scheme involves assigning at most one sender to each receiver so that
collisions are avoided. However, unlike the single receiver case, it is not always feasible to
assign exactly one sender to each receiver. This is due to the broadcast constraint of the senders
that may cause interference at other receivers. For example, in the configuration depicted in Fig.
\ref{fig:bipartite}, we cannot allow both of the senders to transmit simultaneously. Hence, the
delivery time for receivers 1 is affected by that of receiver 3, and it is larger than the case
where other receivers are not present. Therefore, we have
$$T^{(j)}_D \geq \frac{|\Gamma_I(j)|}{1-p}.$$


If a collision recovery method such as ZigZag decoder is implemented at the receiver, similar to
the single receiver case, every sender keeps transmitting its packet until an acknowledgement is
received from all of its neighbor receivers. If we use the acknowledgement mechanism  as in the
single receiver case, i.e., ACK any of the packets involved in a collision, then sending an
acknowledgement does not necessarily correspond to receiving an innovative equation. Moreover,
multiple ACKs may be sent to the same sender while the other senders are not acknowledged even
after decoding their packets. This is so since a sender does not stop broadcasting its packet
unless receiving ACKs from all of its neighbors. Here, we slightly modify the acknowledgement
mechanism as follows. Upon a reception at each receiver, the receiver acknowledges any of the
packets involved in the reception (collision) that have not already been acknowledged.

\begin{theorem}\label{delivery_bd_zz}
Consider a single-hop wireless erasure network with collision recovery implemented at the
receivers. The expected delivery time for each receiver $j$ is bounded from above as
$$\mathbb E\big[T^{(j)}_D\big] \le \sum_{k=1}^{|\Gamma_I(j)|} \frac 1{1-p^k} = |\Gamma_I(j)|+O(1).$$
\end{theorem}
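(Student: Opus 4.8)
The plan is to mimic the proofs of Theorems~\ref{thm:correctness} and~\ref{thm:delay_zz}, but to run the argument \emph{locally} at each receiver $j$, tracking only the senders in $\Gamma_I(j)$ and the equations received by $j$. Write $N_j=|\Gamma_I(j)|$. First I would observe that under the modified acknowledgement rule, receiver $j$ acknowledges each sender in $\Gamma_I(j)$ \emph{at most once}: once $j$ has ACKed a sender, that sender is no longer among the ``not yet acknowledged'' packets that $j$ is allowed to ACK. Hence $j$ sends at most $N_j$ acknowledgements, and I may define $T^{(j)}_k$ as the time of $j$'s $k$-th ACK and $X^{(j)}_k=T^{(j)}_k-T^{(j)}_{k-1}$, exactly as in Section~\ref{delay_sec}, so that $T^{(j)}_D\le T^{(j)}_{N_j}=\sum_{k=1}^{N_j}X^{(j)}_k$ (the inequality to be justified below).

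Next I would pin down the law of $X^{(j)}_k$. Immediately after $T^{(j)}_{k-1}$, receiver $j$ has acknowledged exactly $k-1$ distinct senders of $\Gamma_I(j)$, so $N_j-k+1$ of them are still unacknowledged by $j$. Since a sender keeps its head-of-line packet until \emph{all} of its neighbouring receivers acknowledge it, and $j$ has not acknowledged these $N_j-k+1$ senders, each of them is still transmitting in every slot. In a given slot each reaches $j$ (no erasure on its link) independently with probability $1-p$; if at least one of them is heard, the resulting reception involves an unacknowledged sender, so $j$ issues its $k$-th ACK in that slot, whereas if none of them is heard, any reception at $j$ involves only senders $j$ has already ACKed and $j$ issues no new ACK. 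Hence $X^{(j)}_k$ is geometric with success probability $1-p^{N_j-k+1}$, and, reindexing by $\ell=N_j-k+1$ and bounding the tail exactly as in the proof of Theorem~\ref{thm:delay_zz},
\[
\mathbb E\big[T^{(j)}_{N_j}\big]=\sum_{k=1}^{N_j}\frac{1}{1-p^{N_j-k+1}}=\sum_{\ell=1}^{N_j}\frac{1}{1-p^{\ell}}=N_j+O(1).
\]

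It remains to justify $T^{(j)}_D\le T^{(j)}_{N_j}$, i.e.\ that by the time $j$ has sent its $N_j$-th ACK it holds $N_j$ linearly independent equations in the $N_j$ unknowns of $\Gamma_I(j)$. In the single-receiver analysis this was immediate from the invariant ``decoded $\subseteq$ acknowledged,'' which held because an acknowledged sender never transmits again. Here that invariant can \emph{fail}: a sender that $j$ has acknowledged may still be transmitting to satisfy another receiver, so $j$ can receive further equations involving it, may even decode its packet before acknowledging it, and may spend its $k$-th ACK on an already-decoded sender. I would instead prove, by induction over the receptions at $j$ and using the large-field assumption of Section~\ref{abstraction_sec} (a reception at $j$ is innovative iff it involves an as-yet-undecoded packet) together with the fact that $j$ acknowledges a \emph{new} sender whenever a reception involves an unacknowledged one, the invariant
\[
\dim\big(\text{span of equations received by }j\big)\ \ge\ \big|(\text{packets decoded by }j)\cup(\text{senders acknowledged by }j)\big|.
\]
Granting this, at time $T^{(j)}_{N_j}$ the right-hand side equals $N_j$, so $j$'s equation space is full and $j$ decodes every packet of $\Gamma_I(j)$; combined with the display above this yields $\mathbb E[T^{(j)}_D]\le\mathbb E[T^{(j)}_{N_j}]=N_j+O(1)$.

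The main obstacle is the inductive step of this invariant: one must show that an \emph{innovative} reception at $j$ can newly decode at most one previously-unacknowledged sender (so that the set on the right grows by at most one per innovative reception, matching the growth of the dimension), and that an ACK wasted on an already-decoded sender never lets that set outrun the dimension count. Proving this requires tracking how the modified rule forces a sender to be acknowledged essentially the first time it surfaces, unacknowledged, in a reception at $j$; once it is in hand, the rest is a verbatim transcription of the ALOHA-free ZigZag computation of Theorem~\ref{thm:delay_zz}.
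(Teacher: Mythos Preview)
Your setup and the computation of $\mathbb{E}[X^{(j)}_k]$ coincide with the paper's, and you correctly put your finger on the new difficulty: in the multiple-receiver case an acknowledged sender may keep transmitting, so the single-receiver invariant $D_k\subseteq A_k$ of Theorem~\ref{thm:correctness} can fail. Where you diverge from the paper is in how you repair this.

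The paper does \emph{not} track any invariant like $\dim\ge|D\cup A|$. Instead it argues by comparison: fix the same erasure realization and imagine the hypothetical system in which every sender in $\Gamma_I(j)$ stops as soon as $j$ acknowledges it. The ACK instants $T^{(j)}_k$ are identical in the two systems (an ACK is triggered exactly when some sender unacknowledged by $j$ gets through), and in the hypothetical system Theorem~\ref{thm:correctness} applies verbatim, so the $N_j$ hypothetical ACK-receptions already span everything. The actual ACK-receptions differ from the hypothetical ones only by additional random terms contributed by the retransmitting (already-ACKed) senders; under the large-field assumption of Section~\ref{abstraction_sec}, these extra random coefficients generically cannot destroy full rank (the determinant is a nonzero polynomial in them, since it is nonzero when they are all set to zero). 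This one-line reduction replaces your entire induction.

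Your proposed route has a genuine gap. The inductive step you isolate --- ``an innovative reception at $j$ can newly decode at most one previously-unacknowledged sender'' --- is false. For instance, with $\Gamma_I(j)=\{1,2,3,4\}$, take the reception sequence $\{1,3,4\}$ (ACK~$1$), then $\{1\}$, then $\{2,3\}$ (ACK~$2$), then $\{2\}$: the last reception is innovative, yet it simultaneously decodes the two unacknowledged senders $3$ and $4$. Your invariant $\dim\ge|D\cup A|$ survives this example only because earlier non-ACK receptions had built up slack ($\dim>|D\cup A|$), so a correct induction would have to carry that slack explicitly rather than bound the per-step growth of $|D\cup A|$ by one. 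Note also that the weaker invariant $\dim\ge|A|$ already suffices for the conclusion, and it falls out immediately from the paper's coupling argument (the $|A|$ actual ACK-receptions are generically independent because the corresponding hypothetical ones are upper-triangular in the order $s_1,\dots,s_{|A|}$); pursuing $|D\cup A|$ is more than you need.
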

\begin{proof}
Fix a particular receiver $j$. Suppose each sender in $\Gamma_I(j)$ stops transmitting after
receiving an ACK from $j$. By Theorem \ref{thm:correctness} all of the packets at the neighbors of
$j$ are decodable, once all of the senders in $\Gamma_I(j)$ are acknowledged, i.e., the system of
$|\Gamma_I(j)|$ equation at receiver $j$ is full rank. Therefore, even if the acknowledged packet
get retransmitted, the receiver $j$ will have a full rank system after sending $|\Gamma_I(j)|$
ACKs. Now we can divide the delivery time into intervals corresponding to ACK instances, i.e.,
\begin{equation}\label{TD_bd}
T^{(j)}_D(n) \leq \sum_{k=1}^{|\Gamma_I(j)|} X^{(j)}_k,
\end{equation}
where $X^{(j)}_k$ is the duration between sending the $(k-1)^{st}$ ACK and $k^{th}$ ACK. The
inequality could be strict if the system of equations become full rank before sending the last ACK.

Note that, at a give time slot, a new ACK is sent by receiver $j$ if and only if a collision is
received that involves at least one unacknowledged packet. Therefore, $X^{(j)}_{k+1}$ is a
geometric random variable with mean $\frac{1}{1-p^{|\Gamma_I(j)|-k}}$. Similarly to the proof of
Theorem \ref{thm:delay_zz}, the desired result is followed from plugging this into (\ref{TD_bd}).
\end{proof}

The exact characterization of the expected delivery time requires characterizing the exact decoding
process that is beyond the scope of this paper. Note that the upper bound on the expected deliver
time given by Theorem \ref{delivery_bd_zz} differs from the lower bound, $|\Gamma_I(j)|$, by only a
small constant.

\subsection{Stability region}
In this part, we study a wireless erasure network with multiple senders and receivers, where
packets arrive at sender $i$ according to the arrival  process $A_i(t)$. We assume the arrival
processes satisfy Assumption \ref{assump:arrival} for some rate $\lambda_i$. The goal is to
characterize the stability region of the system when the receivers have collision recovery
capabilities. Note that in this scenario, both broadcast and interference constraints are present,
and there are multiple broadcast sessions. We show that the cut-set bound is achievable by
combining network coding at the senders and collision recovery at the receivers.

First, let us state the outer bound given by the cut-set bound. This region is the intersection of
the stability regions given by \ref{MAC_region} for individual receivers.

\begin{theorem}\label{outer_thm}
[Outer bound] Consider a single-hop wireless erasure network with link erasure probability $p$.
Assume that packets arrive at sender $i$ with rate $\lambda_i$. For every receiver $j$, it is
necessary for stability of the system to have
\begin{eqnarray}
   \sum_{i \in S} \lambda_i &\leq& 1-p^{|S|}, \quad \textrm{for all } S \subseteq \Gamma_I(j),  \nonumber \\
    \lambda_i &\geq& 0, \quad \foral i, \label{cutset_bd}
\end{eqnarray}
where $\Gamma_I(j)$ is the set of senders in the neighborhood of receiver $j$.
\end{theorem}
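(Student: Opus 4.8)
The plan is to reduce the statement to the single-receiver necessary condition (\ref{MAC_region}) applied separately to each receiver, so that (\ref{cutset_bd}) emerges simply as the intersection of those conditions over all $j$. Fix a receiver $j$ and an arbitrary subset $S\subseteq\Gamma_I(j)$; I will establish the cut-set inequality $\sum_{i\in S}\lambda_i\le 1-p^{|S|}$ directly. The argument concerns only what receiver $j$ can hear from the senders in $S$, so it is untouched by the presence of the other receivers; indeed, adding receivers can only slow departures, since a packet at sender $i$ must now be acknowledged by every receiver in $\Gamma_O(i)$ before it leaves the queue, and so can only shrink the stability region, never enlarge it.

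The cut I use puts the senders of $S$ on one side and receiver $j$ (together with everything else) on the other. The physical fact I rely on is that in any slot receiver $j$ observes a single linear combination of the packets transmitted in that slot over non-erased links; after projecting onto the subspace spanned by the packets that originated at senders in $S$, this yields at most one new degree of freedom about those packets, and it yields none in a slot in which every link from $S$ to $j$ is erased. Since erasures are independent across links and over time, the number of slots in $[0,t)$ in which at least one link from $S$ to $j$ is open is a sum of i.i.d.\ Bernoulli$(1-p^{|S|})$ variables; hence the number $D^{(j)}_S(t)$ of degrees of freedom about the $S$-packets available at $j$ by time $t$ obeys $D^{(j)}_S(t)\le(1-p^{|S|})\,t+o(t)$ almost surely, and $\mathbb E[D^{(j)}_S(t)]\le(1-p^{|S|})\,t$.

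For the matching lower bound I would use standard queueing reasoning: summing the recursion (\ref{q_dynamic}) and using that a stable queue has $\frac1t\mathbb E[Q_i(t)]\to 0$ --- a consequence of the bounded-time-average definition of stability together with Assumption \ref{assump:arrival} --- the long-run departure rate from queue $i$ must equal $\lambda_i$. A packet from sender $i\in S$ leaves the queue only after receiver $j$ has acknowledged it, and, because the protocol of this section only acknowledges a packet that is still undecoded (so that each acknowledgement of $j$ accompanies an innovative reception, cf.\ Section \ref{abstraction_sec} and the proof of Theorem \ref{thm:correctness}), the number of distinct $S$-packets acknowledged by $j$ up to time $t$ is on the one hand at most $D^{(j)}_S(t)$ and on the other hand at least $\big(\sum_{i\in S}\lambda_i\big)t-o(t)$. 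Dividing by $t$ and letting $t\to\infty$ gives $\sum_{i\in S}\lambda_i\le 1-p^{|S|}$; the nonnegativity constraints are immediate, and ranging over all receivers $j$ and all $S\subseteq\Gamma_I(j)$ yields (\ref{cutset_bd}).

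The step I expect to require the most care is the coupling between the acknowledgements $j$ sends into $S$ and the degrees of freedom $j$ actually holds about the $S$-packets: with coding across packets this is not literally one packet per degree of freedom, and a less restrictive acknowledgement rule could generate several ACKs from a single reception. What rescues the count is that one slot produces one received equation, hence contributes at most one new degree of freedom about the $S$-packets no matter how many ACKs it triggers, so the bookkeeping goes through provided one counts degrees of freedom rather than ACKs and is careful about what ``decodable'' means. If one wants an outer bound valid for every conceivable acknowledgement and decoding scheme, I would instead cast the argument information-theoretically --- bounding the information about the messages of $S$ that crosses the cut in a slot by a constant times the indicator that some link from $S$ to $j$ is open, and noting that eventual decodability at $j$ of all $S$-packets that have arrived forces this accumulated information to grow at rate at least $\sum_{i\in S}\lambda_i$ --- which is the usual cut-set form in which this bound is stated.
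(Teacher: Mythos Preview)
Your approach is essentially the paper's: stability forces the long-run departure rate from each sender to equal its arrival rate, and then a cut between the senders in $S$ and receiver $j$ caps the aggregate departure rate from $S$ at $1-p^{|S|}$. The paper's proof is much terser---it simply asserts ergodicity (so $\mu_i=\lambda_i$) and then invokes ``cuts'' and ``independence of the information at different senders'' without spelling out the degree-of-freedom count you carry out in your second paragraph.

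One remark: your third paragraph ties the inequality to the specific acknowledgement rule of this section (``the protocol of this section only acknowledges a packet that is still undecoded''), which is inappropriate for an outer bound that must hold for \emph{every} policy $\mathcal P$. You correctly diagnose this yourself in your fourth paragraph and supply the right fix---count degrees of freedom (equivalently, information crossing the cut) rather than ACKs, so that the bound holds regardless of how acknowledgements are issued. That protocol-independent form is exactly what the paper's one-line appeal to the cut-set bound is meant to convey, so after your own correction the two arguments coincide; you have simply made explicit the bookkeeping the paper leaves implicit.
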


\begin{proof}
Assume that the system is operating under some policy $\mathcal P$ and is stable. Hence, the Markov
chain corresponding to the queue lengths at the senders is ergodic and has a stationary
distribution. Therefore, the departure rate $\mu_i$ of the queue at sender $i$ is equal to its
arrival rate $\lambda_i$. On the other hand, by independence of the information at different
senders, the departure (transmission) rates should satisfy the following conditions given by the
cuts between each receiver $j$ and the senders over a bipartite graph:
$$   \sum_{i \in S} \mu_i \leq 1-p^{|S|}, \quad \textrm{for all } S \subseteq \Gamma_I(j),$$
which implies the desired result.
\end{proof}

Next, we present transmission and acknowledgement policies that achieve the outer bound given by
Theorem \ref{outer_thm}. The transmission policy is based on network coding, and the
acknowledgement policy is based on the notion of "seen" packets as defined in \cite{ARQ_jay}, and
is build upon a single-receiver acknowledgement policy. Let us start by some definitions and
notations.

\begin{definition}\label{ACK_policy_single}
[Single-receiver ACK policy] Consider a single-hop wireless network of a single receiver and $n$
senders. Let $\bs C \in \mathbb \{0,1\}^n, \bs Q \in \mathbb Z^n$ denote the channel state and
queue-length vectors, respectively. Define an \emph{ACK policy} as the following mapping:
$$f:  \{0,1\}^n \times \mathbb Z^n \rightarrow \{\emptyset, 1,\ldots,n\}.$$
Given the channel state  and the queue-length vectors, $f(\bs C, \bs Q)$ provides the index of at
most one sender to be acknowledged. An ACK policy is \emph{stable} if it stabilizes the queues for
any arrival rate in the stability region of the system.
\end{definition}

Note that the priority-based ACK policy given in Definition \ref{priority_policy} does not require
the queue-length information, while the ACK policy proposed by Tassiulas and  Ephremides uses the
queue-length information.

\begin{definition}\label{multiple_policy}
[Code-ACK policy] Consider a single-hop wireless erasure network. The Code-ACK policy is as
follows:
\begin{itemize}
  \item \textbf{Transmission mechanism}: Each sender transmits a random linear combination of the  packets in its queue at every time slot.
  \item \textbf{Acknowledgement mechanism}: Each receiver $j$ acknowledges the last \emph{seen} packet of
  the sender given by $f_j\big(\bs C^{(j)}(t), \bs Q^{(j)}(t)\big)$, where $f_j$ be a single-receiver ACK policy (cf. Definition
\ref{ACK_policy_single}) for $j$ when other receivers are not present, and
$$\bs C^{(j)}(t) =\{c_{ij}(t): i \in \Gamma_I(j)\},$$
$$\bs Q^{(j)}(t) = \{Q_{ij}(t): i \in \Gamma_I(j)\},$$
where $Q_{ij}(t)$ the backlog of the packets at sender $i$ not yet {seen} by receiver $j$.
\end{itemize}
\end{definition}

\begin{figure}
\centering
  \includegraphics[width = 0.4\textwidth]{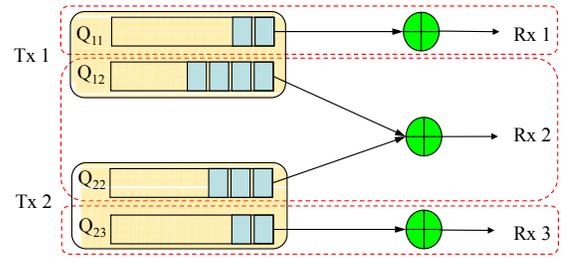}\\
  \caption{Each sender-side queue viewed as two virtual queues containing the packets not seen by the corresponding receiver. }\label{virtual_fig}
\end{figure}

\begin{theorem}\label{stability_thm}
Consider a single-hop wireless erasure network with multiple receivers all capable of collision
recovery. Assume the arrival processes at the senders satisfy Assumption \ref{assump:arrival}. The
Code-ACK policy given in Definition \ref{multiple_policy} achieves any point in the interior of the
region given by (\ref{cutset_bd}), if the single-server ACK policies $f_j$ used in Code-ACK policy
are stable. Moreover, every packet that arrives at a sender will eventually get decoded by all of
the its neighbor receivers.
\end{theorem}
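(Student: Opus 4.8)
The plan is to decouple the multi-receiver problem into one single-receiver subproblem per receiver, using the virtual queues $Q_{ij}(t)$ of Definition~\ref{multiple_policy} (cf.\ Fig.~\ref{virtual_fig}). First I would fix a receiver $j$ and argue that the collection $\{Q_{ij}(t): i \in \Gamma_I(j)\}$ evolves exactly like the queues of a single-receiver network whose senders are $\Gamma_I(j)$, whose channel-state vector is $\bs C^{(j)}(t)$, and whose acknowledgement policy is $f_j$. The crux is that a reception at $j$ is innovative — brings a new degree of freedom — whenever at least one sender $i \in \Gamma_I(j)$ with $Q_{ij}(t)>0$ has its link unerased: this follows from the algebraic model of Section~\ref{abstraction_sec} and the standing large-field-size assumption, together with the ``seen''-packet framework of \cite{ARQ_jay}, under which the number of packets seen by $j$ equals the dimension of its received space and the oldest unseen packet of a source becomes seen as soon as a fresh coded combination containing it is received. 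Since $f_j$ only acknowledges (i.e.\ attributes the newly seen packet to) a sender whose link is up and whose virtual backlog is positive, the induced service process $\mu_{ij}(t)\in\{0,1\}$ is precisely the one $f_j$ would produce in that single-receiver system, so $Q_{ij}(t+1)=\max\{Q_{ij}(t)-\mu_{ij}(t),0\}+A_i(t)$ has the same law as (\ref{q_dynamic}).

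Next I would observe that the interior of the cut-set region (\ref{cutset_bd}) is exactly the intersection over $j$ of the interiors of the single-receiver regions (\ref{MAC_region}) associated with the senders $\Gamma_I(j)$. Hence for a rate vector in the interior of (\ref{cutset_bd}), the sub-vector $(\lambda_i)_{i\in\Gamma_I(j)}$ lies in the stability region of receiver $j$'s single-receiver network; by hypothesis $f_j$ is a stable ACK policy, so the virtual queues $\{Q_{ij}(t)\}_{i\in\Gamma_I(j)}$ have bounded time-average for every $j$. To pass from virtual to physical queues, note that a packet remains in sender $i$'s queue iff it has not been acknowledged — hence, by the ``acknowledge the last seen packet'' rule, not seen — by at least one receiver in $\Gamma_O(i)$, so $Q_i(t)\le\sum_{j\in\Gamma_O(i)}Q_{ij}(t)$; summing the bounded time-averages yields stability of each $Q_i(t)$, which is the first assertion.

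For decodability I would argue as in the proof of Theorem~\ref{decodability_single}: all physical queues are stable and, driven by admissible arrivals (Assumption~\ref{assump:arrival}), are simultaneously empty at infinitely many times. At such a time every packet that ever arrived at a sender in $\Gamma_I(j)$ has been dropped, hence acknowledged and therefore seen by $j$; since the number of packets $j$ has seen equals the dimension of its received space, receiver $j$ then holds as many linearly independent equations as there are unknowns among the packets of $\Gamma_I(j)$. By the invertibility discussion of Section~\ref{abstraction_sec} (collision recovery plus large field size) it can solve for all of them, in particular for every individual packet that arrived at any of its neighbors; since $j$ was arbitrary this proves the second assertion.

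I expect the main obstacle to be the first step: making the equivalence between the virtual-queue dynamics at receiver $j$ and a genuine single-receiver network with collision recovery fully rigorous — in particular, verifying that ``acknowledge the last seen packet of $f_j(\bs C^{(j)}(t),\bs Q^{(j)}(t))$'' always corresponds to a legitimate unit decrement of the chosen virtual queue, which hinges on the seen-packet bookkeeping of \cite{ARQ_jay} (per-source seen sets being prefixes, seen-count equal to received dimension) and on every reception containing an unseen packet being innovative, and checking that the broadcast constraint shared across the receivers of a common sender causes no loss in this argument. Once this coupling is established, the stability and decodability conclusions reduce to the single-receiver results together with a routine summation over $j$.
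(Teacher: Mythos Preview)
Your proposal is correct and follows essentially the same approach as the paper: decouple via the virtual queues $Q_{ij}$, reduce each receiver to a single-receiver system governed by $f_j$, use stability of $f_j$ to stabilize the virtual queues, bound the physical queue by $Q_i(t)\le\sum_{j\in\Gamma_O(i)}Q_{ij}(t)$, and deduce decodability from eventual simultaneous emptiness together with the ``every ACK $\Leftrightarrow$ one innovative reception'' bookkeeping. The paper's write-up is slightly terser on the coupling you flag as the main obstacle (it simply asserts the isolation of each receiver's virtual queues) and phrases the emptiness step in terms of the virtual rather than the physical queues, but the argument is the same.
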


\begin{proof}
Since each sender needs to deliver all of its packets to all of its neighbors, we can think of a
senders's queue as multiple virtual queues targeted for each that sender's neighbors. Each of these
virtual queues contain the packets still needed by the corresponding receiver. An arrival at the
sender corresponds to an arrival to each of its virtual queues, and an ACK from a receiver results
in dropping a packet from the virtual queue of that receiver. A packet is dropped from a sender's
original queue, if it is ACKed by all of its neighbors, in other words, if it is dropped from all
its virtual queues (See Fig. \ref{virtual_fig}). Therefore, we can relate the queue-length at
sender $i$ to those of the virtual queues as follows:
\begin{equation}\label{real_virtual_Q}
Q_i(t) \le \sum_{j \in \Gamma_O(i)} Q_{ij}(t).
\end{equation}

In the Code-ACK policy, receivers acknowledge a seen packet from a sender. Thus, the  virtual
queues at sender $i$ corresponding to receiver $j$ coincides with the packets at sender $i$ not yet
seen by receiver $j$. Moreover, upon every reception at receiver $j$, the corresponding virtual
queue of sender $f_j\big(\bs C^{(j)}(t), \bs Q^{(j)}(t)\big)$ is served. Therefore, we can isolate
each receiver $j$ and its corresponding virtual queues from the rest of the network, and treat the
isolated part as single-receiver erasure network.

By comparing the regions described in (\ref{cutset_bd}) and (\ref{MAC_region}), we observe that the
region for the  multiple-receiver case is a subset of the one for the single-receiver case. Since
$f_j$ is a stable single-receiver ACK policy for every receiver $j$, all of the virtual queues are
stable. Therefore, by (\ref{real_virtual_Q}) all of the sender-side queues are stable.

It remains to show that that all of the packets arriving at a sender are eventually decodable at
its neighbor receivers. Similarly to the proof of Theorem \ref{decodability_single}, it is
sufficient to show that for every ACK sent by receiver $j$, a degree of freedom (innovative packet)
is received at receiver $j$. If this is the case, by stability of the virtual queues corresponding
to receiver $j$, they all eventually become empty and there are as many degrees of freedom at the
receiver as there are unknowns. Hence, every packet arrived at the senders in $\Gamma_I(j)$ are
decodable.

Now, we prove the above claim. Let receiver $j$ send and ACK to sender $i$ at the end of slot $t$.
First, we observe that the link between $i$ and $j$ should be connected during slot $t$, and
$Q_{ij}(t) >0$. Sender $i$ broadcasts a random linear combination of the packets in its queue which
include the packets in the virtual queue $Q_{ij}$. If the field size is large enough, we can assume
that the coefficients corresponding to at least one of the packets in virtual queue $Q_{ij}$ is
nonzero. Hence, the reception at receiver $j$ at time slot $t$ should have involved a packet from
sender $i$ that was not seen by receiver $j$. Since all decoded packets are seen \cite{ARQ_jay},
the collision at receiver $j$ at time $t$ involves a packet that is not yet decoded, and hence, it
is a new degree of freedom (innovative reception).
\end{proof}

\section{Conclusions}\label{conclusion_sec}
In this paper, we have studied the delay and throughput performance of collision recovery methods,
e.g. ZigZag decoding \cite{ZigZag}, for a single-hop wireless erasure network. Using an algebraic
representation of the collisions allowed us to view  receptions at a receiver as linear
combinations of the packets at the senders. The algebraic framework provides alternative collision
recovery  methods and generalizations for the case when the transmitted packets are themselves
coded versions of the original packets.

 We have focused on two situations -- the completion time for all of the senders to deliver a
single packet to their neighbor receivers, and the rate region in the case of streaming arrivals.
We show that the completion time at a receiver with collision recovery is at most by a constant
away from the degree of that receiver which is the ultimate lower bound in this setup. For the
streaming case, we present a decentralized  acknowledgement mechanism that could serve as an
ARQ-type mechanism for achieving the capacity of a wireless erasure network when both broadcast and
interference constraints are present. Our conclusion is that collision recovery approach allows
significant improvements upon conventional contention resolution approaches in both the completion
time as well as the rate region, while not requiring any coordinations among the senders.

\end{document}